\documentclass[letterpaper]{article}
\usepackage[preprint]{aaai2027}
\usepackage[hyphens]{url}
\usepackage{graphicx}
\usepackage{natbib}
\usepackage{caption}
\usepackage{subcaption}
\usepackage{booktabs}
\usepackage{multirow}
\usepackage{amsmath}
\usepackage{amssymb}
\usepackage{amsthm}
\usepackage{placeins}
\usepackage{xcolor}
\usepackage[most]{tcolorbox}
\newtheorem{theorem}{Theorem}
\newtheorem{proposition}[theorem]{Proposition}
\newtheorem{lemma}[theorem]{Lemma}
\newtheorem{corollary}[theorem]{Corollary}
\newtheorem{definition}{Definition}
\NewDocumentEnvironment{remark}{o}{%
  \begin{tcolorbox}[
    enhanced,
    breakable,
    colback=black!3,
    colbacktitle=black!9,
    colframe=black!28,
    boxrule=0.45pt,
    arc=2mm,
    outer arc=2mm,
    left=3mm,
    right=3mm,
    top=2mm,
    bottom=2mm,
    toptitle=1.3mm,
    bottomtitle=1.3mm,
    boxsep=0pt,
    before skip=7pt,
    after skip=7pt,
    fonttitle=\bfseries,
    coltitle=black,
    title={Remark},
    title after break={Remark (continued)}
  ]%
}{%
  \end{tcolorbox}%
}

\newcommand{\minex}{\textsc{MinEx}}
\newcommand{\cleanup}{\textsc{CleanUp-lite}}

\newcommand{\cset}{\mathcal{C}}

\usepackage{float}
\usepackage{multicol}
\usepackage{enumitem}

\title{After Cooperation Is Learned: Gradient Routing and Optimizer-Dependent Maintenance in Multi-Agent Reinforcement Learning}
\author{
Chaoyuan Hao, Wentao Yue, Tianyou Lai,\\
Hongji Li, Jiayi Zhou, Qingyu Mao, Qilei Li
}
\affiliations{}

\newtheorem{suppTheorem}{Theorem}
\newtheorem{suppProposition}[suppTheorem]{Proposition}
\newtheorem{suppLemma}[suppTheorem]{Lemma}
\newtheorem{suppCorollary}[suppTheorem]{Corollary}
\theoremstyle{remark}
\newtheorem{suppRemark}[suppTheorem]{Remark}
\theoremstyle{plain}
\begin{document}

\maketitle
\begin{abstract}
Cooperative MARL is commonly evaluated through cooperation discovery from
  random initialization, leaving open whether continued optimization can
  destabilize learned cooperation. Actor--critic comparisons can also conflate
  critic presence with value gradients entering shared actor representations.
  We study \emph{cooperation maintenance}, defined as the survival of a
  behaviorally verified cooperative policy under continued training. We
  formulate maintenance as a right-censored event-time problem and compare
  matched warm starts: X0 allows value loss gradients to update shared actor
  features, X1 retains the critic while blocking those gradients, and X5 removes
  the learned critic as a critic-free reference. This isolates direct
  value-gradient access while controlling initialization, critic computation,
  and evaluation. Positive reward scaling preserves strategic preferences and
  equilibria while perturbing learning dynamics. Gradient audits confirm the
  intended routing pathways, and frozen-policy torso perturbations probe whether
  route-induced updates align with local cooperation boundaries. In
  confirmatory \minex{} and \cleanup{} experiments, higher scales selectively
  increase maintenance sensitivity in X0; X1 remains near the censoring ceiling,
  and X5 has no confirmed events in the tested settings. In \cleanup{},
  route-by-scale displacement is associated with reduced local cooperation
  margins; \minex{} shows a weaker, optimizer-dependent effect. These results
  identify a conditional, scale-sensitive maintenance risk associated with
  direct value-gradient routing rather than a universal failure of critics.
\end{abstract}

\vspace{-0.1in}
\section{Introduction}

Most cooperation research in multi-agent reinforcement learning (MARL) asks
whether agents can \emph{discover} a cooperative policy.  Sequential social
dilemmas and broad evaluation suites have made this question increasingly
precise \citep{leibo2017multi,leibo2021scalable,guo2026socialjax}.  Deployment,
however, raises a different question: \textbf{after cooperation has been
learned, can continued optimization erase it?}  Endpoint performance from random
initialization cannot isolate this maintenance problem because failure may
instead reflect insufficient exploration, failed coordination, or equilibrium
selection.

We focus on one mechanism that can produce such post-discovery failure:
\textbf{interference through shared actor--critic representations}.  Although
such sharing allows policy and value learning to exploit common features, it
also gives value optimization a direct route into the representation that
determines behavior.  The critic may improve credit assignment, yet its
gradients can simultaneously move the actor away from cooperation.  Critic
presence and value-gradient access are therefore distinct factors.  Because
representation sharing and objective separation are active design choices in
modern policy-gradient methods
\citep{cobbe2021phasic,raileanu2021decoupling,andrychowicz2021matters},
comparisons between unrelated actor--critic algorithms cannot isolate this
route from differences in architecture, initialization, or optimization.

To address both the evaluation and identification gaps, we define
\emph{cooperation maintenance} as the survival of a behaviorally verified
cooperative policy under continued learning and evaluate it from a common
cooperative checkpoint.  All runs begin from byte-identical actors and follow
one of three matched training arms: X0 allows the value loss to update features
shared with the actor; X1 retains the same critic and its forward computation
as X0 but blocks the value gradient at the shared features; and X5 removes the
learned critic.  With the arms defined, the X0--X1 contrast isolates direct
value-gradient access while controlling the initial actor, critic computation,
and evaluation, whereas X5 serves as a critic-free reference for failures that
persist without value learning.  We then apply positive reward scaling as a
\emph{game-preserving intervention}: it leaves strategic preferences, best
responses, equilibria, and linear incentive constraints unchanged while
perturbing value targets and optimizer dynamics.  Frozen evaluations along
each learning trajectory record first failure, sustained failure, and
recovery.  Treating these outcomes as right-censored event times distinguishes
transient deviations from persistent collapse and includes runs that remain
cooperative through the evaluation horizon.

Across two structurally distinct environments, the experiments reveal a
conditional, scale-sensitive maintenance risk concentrated in X0 rather than
a universal failure of critics.  In the confirmatory cells, X0 becomes more
fragile as reward scale increases, while X1 remains substantially more robust
and the critic-free X5 reference has no confirmed events.  Direct gradient
audits verify that the intended route is active only in X0, and local
behavioral-margin tests show clear adverse alignment in \cleanup{} but weaker,
contrast-dependent evidence in \minex{}.  Complementary optimizer settings
yield either all-arm survival or failure even without a critic, showing that
the observed routing-specific risk depends on both optimizer and environment.

Our contributions are:
\begin{itemize}
    \item We formulate cooperation maintenance as a right-censored temporal
    problem, separating post-discovery behavioral survival from exploration,
    coordination, and equilibrium-selection failures.
    \item We introduce a matched X0/X1/X5 warm-start design that isolates
    direct value-gradient access to actor representations, and combine it with
    game-preserving reward scaling, exact fixed-batch routing results, and
    production-path gradient audits.
    \item We provide fresh-seed confirmatory evidence across \minex{} and
    \cleanup{}, together with local behavioral-margin and optimizer-boundary
    diagnostics that separate outcome association from behavioral mechanism
    and delimit the claim.
\end{itemize}

\begin{figure*}[t]
\centering
\includegraphics[width=0.98\textwidth]{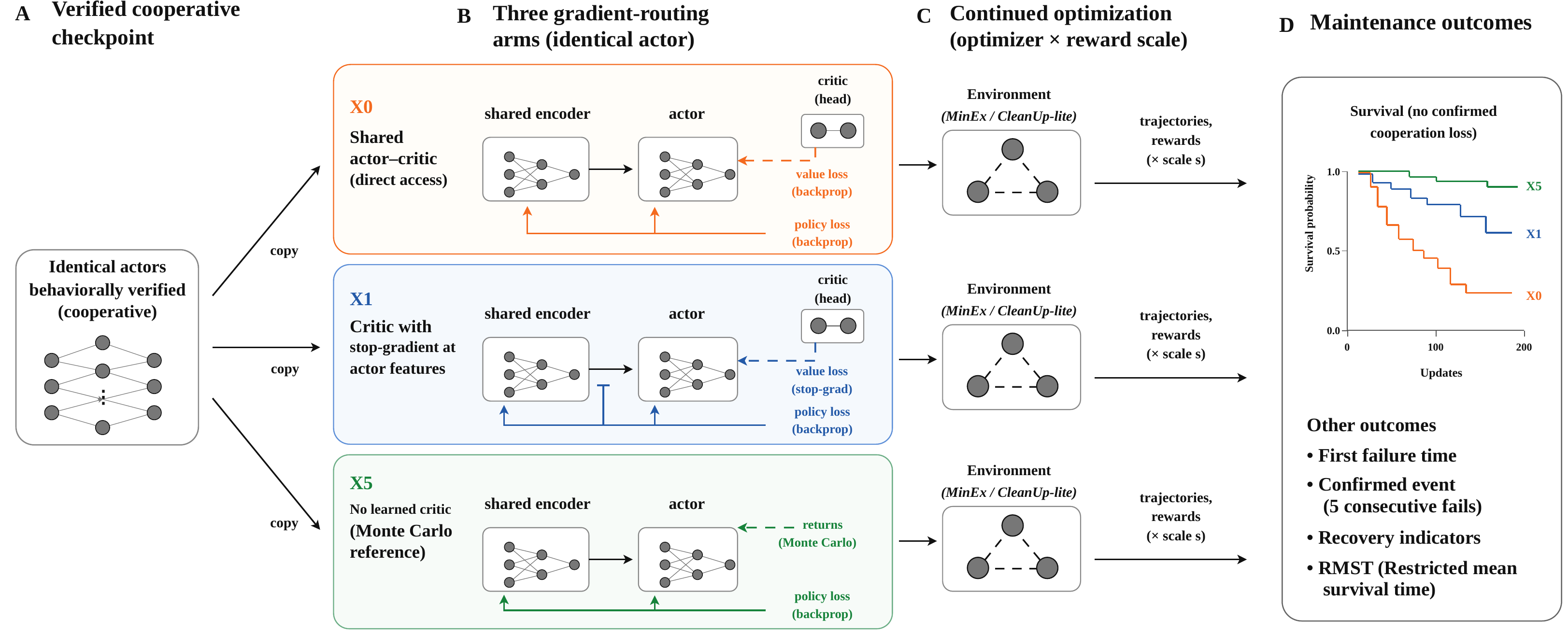}
\caption{\textbf{From discovery to maintenance.}
\textbf{(A)} Start from a verified cooperative actor.
\textbf{(B)} X0 passes value gradients into the shared torso, X1 blocks them
while retaining the critic, and X5 removes the critic.
\textbf{(C)} Continued optimization varies optimizer and strategically
equivalent reward scales.  \textbf{(D)} Frozen evaluations track cooperation
loss and recovery.}
\label{fig:concept}
\end{figure*}

\vspace{-0.1in}
\section{Related Work}
\noindent\textbf{Cooperation and social dilemmas in MARL.}
Sequential social dilemmas formalize cooperation as a temporally extended
policy rather than a one-shot action \citep{leibo2017multi}.  Recent suites
broaden evaluation across substrates, unseen partners, and efficiently
implemented social dilemmas
\citep{leibo2021scalable,guo2026socialjax}.  Cooperative MARL has also expanded
the mechanisms used to obtain strong joint behavior, including
reward-assisted learning and sequence models for coordinated action
\citep{wang2022individual,wen2022multi}.  These lines of work sharpen the
question of whether cooperation can be discovered, transferred, or
generalized.  Our estimand starts after acquisition: it measures how long a
specific, verified checkpoint remains within an operational cooperation set
while optimization continues.  Conditioning on prior success removes initial
exploration and equilibrium selection from the primary comparison, allowing
post-discovery behavioral loss to be studied directly.

This focus is related to, but different from, recent work on plasticity loss.
Deep-RL training can produce primacy bias, dormant representations, reduced
capacity, and a declining ability to incorporate experience
\citep{lyle2022understanding,nikishin2022primacy,sokar2023dormant,
abbas2023loss,dohare2024loss}.  Resets and plasticity injection
can restore learning
\citep{d2022sample,nikishin2023deep}.  We fix the task and strategic
ordering, then test whether updates erase successful collective behavior.
Plasticity loss impairs acquisition; maintenance loss can occur while the
actor continues changing.  Both show that current performance does not
certify later behavior.

\noindent\textbf{Actor-Critic interference and representation sharing.}
Actor--critic methods are sensitive to implementation and optimization choices
\citep{schulman2017proximal,andrychowicz2021matters,yu2022surprising}.
PPG separates policy and value phases to limit objective interference
\citep{cobbe2021phasic}; IDAAC uses separate policy and value representations to
improve generalization \citep{raileanu2021decoupling}.  In cooperative MARL,
centralized and structured critics can coordinate policy improvement, but
their factorization, parameter sharing, and dependency structure affect what
is learned \citep{NEURIPS2021_65b9eea6,pmlr-v162-fu22d,chen2023context}.  Thus,
critics are not interchangeable components.  We isolate one direct
value-gradient route and test its effect on learned-cooperation survival.

Parameter sharing improves sample efficiency but couples representations and
updates \citep{christianos2021scaling}.  Cooperative PPO
is a strong baseline when its implementation is carefully controlled
\citep{yu2022surprising}.  MARL benchmarks emphasize matched implementations,
multiple seeds, uncertainty,
and transparent aggregation
\citep{papoudakis2021benchmarking,gorsane2022towards,ellis2023smacv2,
JMLR:v25:23-1612}.  Accordingly, our arms share the architecture and initial
actor, use paired seeds where appropriate, fix the endpoints in advance, and
report complete separation.

\noindent\textbf{Reward scale and reliable deep-RL evaluation.}
Positive reward scaling preserves fixed-policy strategic ordering, but
deep-learning updates are not generally scale invariant.  PPO practice
includes reward, value, or advantage normalization, clipping, and adaptive
optimization, which can introduce scale dependence
\citep{andrychowicz2021matters,yu2022surprising}.  We exploit rather than
eliminate this gap: scaling preserves the strategic problem while probing the
optimizer and direct value route.

Reliable RL evaluation requires more than final means from a small number of
runs \citep{agarwal2021deep,gorsane2022towards}.  Our primary outcomes
are event-time and RMST, supported by continuous behavior, raw event counts,
and paired-seed analyses.  These summaries remain interpretable when an arm
has no observed events.

\vspace{-0.1in}
\section{Problem Formulation}

\subsection{Discovery and Maintenance}

Let $\mathcal{G}$ be a Markov game, $\boldsymbol{\pi}$ a joint policy, and
$\cset_{\mathcal{G}}$ the set of policies satisfying a preregistered,
environment-specific cooperation criterion.

\begin{definition}[Cooperation discovery]
Given a training procedure initialized from a distribution
$\boldsymbol{\pi}_0\sim\nu_0$, discovery asks whether some learned policy
$\boldsymbol{\pi}_t$ enters $\cset_{\mathcal{G}}$.
\end{definition}

\begin{definition}[Cooperation maintenance]
Given a verified cooperative checkpoint
$\boldsymbol{\pi}_0\in\cset_{\mathcal{G}}$, maintenance asks for the survival
process
\begin{align}
S(t)=\Pr\big(&\boldsymbol{\pi}_{\tau}\in\cset_{\mathcal{G}}
\text{ without a confirmed event},\nonumber\\[-2pt]
&\forall\,0\leq\tau\leq t\big)
\end{align}
under continued learning.
\end{definition}

Discovery can fail because agents never coordinate.  Maintenance can fail
after exploration and equilibrium selection have been removed from the
comparison.  Conditioning on a successful checkpoint therefore changes the
failure variable: the outcome is no longer whether training ever reaches
cooperation, but how long subsequent updates preserve it.  This distinction
motivates the matched warm-start design and temporal endpoint below.

\vspace{-0.1in}
\subsection{Frozen Temporal Endpoint}

At fixed training intervals, we evaluate the current actor with learning
disabled and raw, unscaled environment rewards.  A checkpoint is successful
when all environment-specific criteria hold.  A \emph{confirmed maintenance
event} occurs at the fifth consecutive failed checkpoint after successful
true-step-0 evaluation.  The event time is the training time of the fifth
failure.  Runs without an event are right-censored at the frozen horizon
$\tau$.

We report:
\begin{align}
\text{RMST}(\tau) &= \int_0^\tau \widehat{S}(t)\,dt,\\
\text{AUC}_{J}(\tau) &= \frac{1}{\tau}\int_0^\tau J_{\mathrm{raw}}(t)\,dt,
\end{align}
along with first failure, recovery, final success, and continuous behavior.
Five consecutive failures reduce sensitivity to individual noisy evaluations;
the raster plots retain transient failures rather than erasing them.

\vspace{-0.1in}
\section{Game-Preserving Intervention and Gradient Routing}
\label{sec:scaling-routing}

Positive reward scaling separates strategic invariance from learning
stability.  We retain one diagnostic equation and one interpretive remark for
each theoretical result; complete statements, assumptions, and proofs appear
in Supplementary Material, Appendix~A.

\vspace{-0.1in}
\subsection{Strategic Invariance}

Let $J_i(\boldsymbol{\pi})$ be agent $i$'s discounted return under joint policy
$\boldsymbol{\pi}$.  Scaling every reward by $s>0$ gives the following formal
result.
\begin{theorem}[Strategic invariance]
\label{thm:main-strategic-invariance}
\begin{equation}
J_i^{(s)}(\boldsymbol{\pi})=sJ_i(\boldsymbol{\pi}).
\label{eq:main-strategic-invariance}
\end{equation}
\end{theorem}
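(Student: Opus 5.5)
The plan is to prove \eqref{eq:main-strategic-invariance} directly from the definition of the discounted return, using linearity of expectation. Since the scaled game $\mathcal{G}^{(s)}$ differs from $\mathcal{G}$ only in its reward functions, I first fix a joint policy $\boldsymbol{\pi}$ and an initial-state distribution and note that the induced distribution over trajectories $(s_0,\mathbf{a}_0,s_1,\mathbf{a}_1,\dots)$ is unchanged by scaling. Transitions, the initial distribution, the discount $\gamma$, and the policies themselves do not depend on rewards. This is the one point I would state explicitly: scaling is applied to the environment's reward signal, not to the policy's inputs. If the paper's policies condition on past rewards, I would read ``policy'' as a fixed mapping from histories to actions. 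In that case the trajectory law is still identical once rewards are treated as outputs rather than inputs, or else one restricts to observation-based policies as in the actor architecture.

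With the trajectory measure $\Pr_{\boldsymbol{\pi}}$ fixed, I would write $J_i^{(s)}(\boldsymbol{\pi})=\mathbb{E}_{\boldsymbol{\pi}}\bigl[\sum_{t\ge 0}\gamma^t\, s\,r_i(s_t,\mathbf{a}_t)\bigr]$. I would then pull the constant $s$ out of the sum and the expectation to obtain $s\,J_i(\boldsymbol{\pi})$.

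The only technical step is justifying the interchange for infinite horizons. I would assume bounded rewards $|r_i|\le R_{\max}$ and $\gamma<1$, so that the series is dominated by $R_{\max}/(1-\gamma)$. Then the series converges absolutely and linearity of expectation, or dominated convergence, applies. For finite-horizon or episodic games the sum is finite and no argument is needed.

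Finally, I would record the consequences claimed in the introduction as immediate corollaries, since $s>0$ preserves order. $J_i^{(s)}(\pi_i,\boldsymbol{\pi}_{-i})\ge J_i^{(s)}(\pi_i',\boldsymbol{\pi}_{-i})$ holds if and only if the same inequality holds unscaled. So best responses, Nash equilibria, and any linear incentive constraint with zero right-hand side are identical across $s$.

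I expect no real obstacle. The main care point is the modeling clarification that the trajectory law is reward-independent. The paper emphasizes that learning dynamics can nonetheless change, because the \emph{learner's} update, unlike a fixed $\boldsymbol{\pi}$, does depend on the rewards.
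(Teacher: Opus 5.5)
Your proposal is correct and follows the paper's proof: both pull the positive constant $s$ out of the discounted sum by linearity of expectation, then use $s>0$ to show that preference orderings, best responses, and Nash equilibria are unchanged. Your extra care about the trajectory law being reward-independent and about bounded rewards is harmless but not strictly needed, since $s$ factors out of the series pathwise before any expectation is taken.
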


\begin{remark}
Theorem~\ref{thm:main-strategic-invariance} preserves fixed-policy preferences,
best responses, Nash equilibria, and homogeneous linear incentive comparisons.
Reward scaling therefore changes the learning system without changing which
strategic solutions the game supports; it does not imply that a neural
optimizer is scale invariant.
\end{remark}

In \minex{}, let $P$, $R$, and $T$ be the punishment, cooperative, and one-shot
deviation returns, with $T>P$ and $\delta^\star=(T-R)/(T-P)$.
\begin{corollary}[Invariance of the repeated-game incentive threshold]
\label{cor:main-minex-threshold}
\begin{equation}
\delta^{\star,(s)}
=\frac{sT-sR}{sT-sP}
=\delta^\star .
\label{eq:main-minex-threshold}
\end{equation}
\end{corollary}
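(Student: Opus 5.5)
The plan is to apply Theorem~\ref{thm:main-strategic-invariance} to the three reference returns $P$, $R$, and $T$, and then cancel the common factor $s$ in the threshold formula. The first step is to pin down what $P$, $R$, and $T$ are. Each is the discounted return $J_i$ of a fixed agent under a fixed joint policy. $R$ comes from mutual cooperation, $P$ from mutual punishment or defection, and $T$ from the one-shot deviation path in which the deviator plays the tempting action against cooperators. That path is itself a fixed joint policy, or at least a fixed stationary-plus-one-step policy whose return is still a discounted sum of per-step rewards. Theorem~\ref{thm:main-strategic-invariance} then gives $T^{(s)}=sT$, $R^{(s)}=sR$, and $P^{(s)}=sP$. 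If the deviation return is not literally a stationary joint policy, the argument can go one level lower: every one of these quantities is $\sum_t \gamma^t r_t$ along a trajectory distribution that does not depend on the reward scale, since scaling does not change the transition kernel or the policies. Linearity of expectation then gives the factor $s$ directly.

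The second step substitutes these into the definition of the threshold in the scaled game: $\delta^{\star,(s)}=(T^{(s)}-R^{(s)})/(T^{(s)}-P^{(s)})=(sT-sR)/(sT-sP)$. Because $s>0$ and $T>P$, the denominator satisfies $s(T-P)>0$. The threshold is therefore well defined in the scaled game, and the hypothesis $T^{(s)}>P^{(s)}$ continues to hold. Factoring $s$ out of the numerator and the denominator and cancelling gives $(T-R)/(T-P)=\delta^\star$.

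There is essentially no obstacle here. The only point needing care is the one in the first step: checking that the deviation return $T$ belongs to the class covered by Theorem~\ref{thm:main-strategic-invariance}, or else falling back on the trajectory-level linearity argument. It is also worth noting that positivity of $s$ matters beyond nonzeroness. It keeps the inequality $T>P$, and hence the sign and interpretation of the incentive comparison $\delta\ge\delta^\star$, unchanged, which a negative scale would reverse.
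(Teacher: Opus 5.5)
Your proposal is correct and follows the paper's proof: you obtain $P^{(s)}=sP$, $R^{(s)}=sR$, $T^{(s)}=sT$ from positive reward scaling and then cancel the common factor $s>0$ in $(T-R)/(T-P)$. The point you flag about $T$ is settled most directly by noting that $T$ is a best-deviation payoff and $\max_x s f(x)=s\max_x f(x)$ for $s>0$, so the maximizing deviation and its scaled value carry over to the scaled game; the paper simply asserts $T^{(s)}=sT$.
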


\noindent
Corollary~\ref{cor:main-minex-threshold} shows that positive reward scaling
preserves both profitability and incentive compatibility.  A scale-dependent
maintenance outcome therefore cannot be attributed to a different repeated
game or a different cooperation threshold.

\vspace{-0.1in}
\subsection{Matched Direct-Route Identification}

Let $\theta$ denote the shared actor torso, $\phi$ the value head,
$\mathcal{B}$ a fixed batch, $L_V$ the quadratic value loss,
$y_b^{(s)}$ the value target at scale $s$, and
$d_b^{(s)}=y_b^{(s)}-y_b^{(1)}$.  At a common parameter point, reward scaling
changes the value gradient by
\begin{lemma}[Value-gradient displacement]
\label{lem:main-value-displacement}
\begin{equation}
\nabla_\theta L_V^{(s)}-\nabla_\theta L_V^{(1)}
=-\frac{1}{|\mathcal{B}|}\sum_{b\in\mathcal{B}}
d_b^{(s)}\nabla_\theta V_{\theta,\phi}(o_b).
\label{eq:main-value-displacement}
\end{equation}
\end{lemma}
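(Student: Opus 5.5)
The plan is a direct computation at the common parameter point $(\theta,\phi)$ with the batch $\mathcal{B}$ held fixed. I will take the quadratic value loss in the standard form
\begin{equation*}
L_V^{(s)}(\theta,\phi)=\frac{1}{2|\mathcal{B}|}\sum_{b\in\mathcal{B}}\bigl(y_b^{(s)}-V_{\theta,\phi}(o_b)\bigr)^2 ,
\end{equation*}
and treat the targets $y_b^{(s)}$ as constants with respect to $\theta$. This is the stop-gradient convention under which the value loss is actually optimized. Bootstrapped targets are computed from a detached or target network and the rewards, so they do not depend on the current $\theta$.

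First, I differentiate each summand by the chain rule. This gives
\begin{equation*}
\nabla_\theta L_V^{(s)}=-\frac{1}{|\mathcal{B}|}\sum_{b\in\mathcal{B}}\bigl(y_b^{(s)}-V_{\theta,\phi}(o_b)\bigr)\nabla_\theta V_{\theta,\phi}(o_b).
\end{equation*}
Second, I write the same expression at $s=1$. Because the parameter point and the observations $o_b$ are shared, both the prediction $V_{\theta,\phi}(o_b)$ and its Jacobian $\nabla_\theta V_{\theta,\phi}(o_b)$ are identical in the two expressions.

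Third, I subtract the two gradients. The prediction terms cancel inside each residual, which leaves $y_b^{(s)}-y_b^{(1)}=d_b^{(s)}$ multiplying the common Jacobian. This yields \eqref{eq:main-value-displacement}.

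If the paper's loss carries a coefficient $c_V$ or omits the $\tfrac12$, the identity changes only by that constant factor. In that case I would state the normalization explicitly so that it matches the displayed form.

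The main obstacle is not the algebra but justifying that $y_b^{(s)}$ does not depend on $\theta$ and that the predictions really coincide across scales. Both facts rely on the "common parameter point, fixed batch" hypothesis. If the targets bootstrap from $V_{\theta,\phi}$ without a stop-gradient, extra terms $\nabla_\theta y_b$ would appear. I would therefore invoke the detached-target convention explicitly.

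It is also worth noting that nothing requires $d_b^{(s)}$ to take a specific form. For linear reward scaling with scale-covariant bootstrapping, one has $d_b^{(s)}=(s-1)y_b^{(1)}$, which connects the lemma to Theorem~\ref{thm:main-strategic-invariance}. However, the lemma holds for arbitrary target differences, and I would present it that way.
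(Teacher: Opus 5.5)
Your proposal is correct and matches the paper's proof: differentiate the quadratic loss $\frac{1}{2|\mathcal{B}|}\sum_{b}\bigl(V_{\theta,\phi}(o_b)-y_b^{(s)}\bigr)^2$ with the targets held fixed on the batch, subtract the scale-$1$ gradient, and let the common prediction terms cancel sample by sample. Your remark that the targets must be detached from $\theta$ states explicitly an assumption the paper leaves implicit in its fixed-batch setup, and your guessed normalization (the factor $\tfrac12$, with $c_V$ outside $L_V$) agrees with the paper's definition.
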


\noindent
Lemma~\ref{lem:main-value-displacement} isolates the scale-sensitive signal
available to a shared critic.  X0 sends this signal into the actor torso,
whereas X1 blocks it and X5 has no learned value objective.  The identity is
local to a common checkpoint and batch; it does not claim that later on-policy
trajectories remain matched.

Assume that the arms share the checkpoint, batch, and normalized policy
contribution, and let $\Delta\theta_{a,s}$ denote the first unclipped-SGD torso
update for arm $a$ at scale $s$.  Let $\mathcal{D}^{\theta}_s$ denote the
X0-minus-X1 difference between scales $s$ and $1$; $\alpha_{\mathrm{lr}}$
and $c_V$ are the learning rate and value-loss weight.  Then
\begin{theorem}[Direct-route scale dependence]
\label{thm:main-direct-route}
\begin{equation}
\mathcal{D}^{\theta}_s
=\frac{\alpha_{\mathrm{lr}}c_V}{|\mathcal{B}|}
\sum_{b\in\mathcal{B}}d_b^{(s)}
\nabla_\theta V_{\theta,\phi}(o_b).
\label{eq:main-direct-route}
\end{equation}
\end{theorem}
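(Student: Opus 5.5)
I would write the first unclipped-SGD torso update for each arm explicitly and then take the double difference that defines $\mathcal{D}^{\theta}_s$. Under the stated assumptions (common checkpoint, common batch $\mathcal{B}$, identical normalized policy contribution), the total loss for a shared-torso arm is $L^{(s)}=L_\pi+c_V L_V^{(s)}$ (plus any entropy term, which does not depend on $s$). In X0 the torso update is
\begin{equation*}
\Delta\theta_{\mathrm{X0},s}=-\alpha_{\mathrm{lr}}\bigl(\nabla_\theta L_\pi+c_V\nabla_\theta L_V^{(s)}\bigr).
\end{equation*}
In X1 the stop-gradient at the shared features removes the value term from the torso gradient, so $\Delta\theta_{\mathrm{X1},s}=-\alpha_{\mathrm{lr}}\nabla_\theta L_\pi$. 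This update is independent of $s$, because the policy contribution is normalized and assumed common. The value head $\phi$ still receives its gradient in X1, but that gradient is not part of the torso update and plays no role here.

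Next I form the difference
\begin{equation*}
\mathcal{D}^{\theta}_s=(\Delta\theta_{\mathrm{X0},s}-\Delta\theta_{\mathrm{X1},s})-(\Delta\theta_{\mathrm{X0},1}-\Delta\theta_{\mathrm{X1},1}).
\end{equation*}
The policy terms cancel twice: between arms, since they share the checkpoint, batch, and policy contribution, and across scales, since the normalized contribution does not depend on $s$. This leaves
\begin{equation*}
\mathcal{D}^{\theta}_s=-\alpha_{\mathrm{lr}}c_V\bigl(\nabla_\theta L_V^{(s)}-\nabla_\theta L_V^{(1)}\bigr).
\end{equation*}
Substituting Lemma~\ref{lem:main-value-displacement} turns the two minus signs into a plus and gives \eqref{eq:main-direct-route} directly.

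The step needing the most care is justifying that the policy contribution to the torso gradient really is identical across arms and scales. The advantages are built from the critic, so they could depend on $s$, and possibly on the arm through value-head parameters. I would rely on the stated normalization assumption here: at a common checkpoint the critic forward pass is identical in X0 and X1, as the design ensures. Per-batch advantage normalization then removes the positive factor introduced by scaling, provided that under scaling both the rewards and the critic's baseline scale consistently. If that proportionality fails, I would simply take the matched normalized policy contribution as the hypothesis, which is how the theorem is phrased.

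A secondary point is sign and convention bookkeeping. The loss $L_V$ must be the quadratic $\frac{1}{2|\mathcal{B}|}\sum_b (V-y_b)^2$ with targets treated as constants under stop-gradient, so that the lemma's factor matches. The update must be plain SGD with no clipping, Adam scaling, or gradient-norm clipping, since any of these would break the linearity the cancellation uses.
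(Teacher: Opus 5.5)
Your proposal is correct and takes the same approach as the paper's proof. Both write the unclipped torso step as $-\alpha_{\mathrm{lr}}\bigl(g_{\pi,\theta}+\kappa_a c_V\nabla_\theta L_V^{(s)}\bigr)$ with $\kappa_{\mathrm{X0}}=1$ and $\kappa_{\mathrm{X1}}=0$, cancel the common policy term in the double difference, and apply Lemma~\ref{lem:main-value-displacement} to flip the sign. One small sharpening: with your grouping into same-scale X0--X1 contrasts, the policy terms already cancel inside each bracket, because the two arms share the actor forward pass and critic at the common checkpoint; the across-scale normalization you worry about in your last paragraphs is therefore needed for the X1 scale-invariance corollary, not for this identity.
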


\begin{remark}
Theorem~\ref{thm:main-direct-route} shows that the first matched X0--X1
difference under unclipped SGD is exactly the direct value-to-torso effect,
subject to the matched-policy condition.  The statement identifies a
first-update mechanism, not the full later training trajectory.
\end{remark}

\begin{corollary}[Scale invariance of the stopped route under unclipped SGD]
\label{cor:main-x1-unclipped}
\begin{equation}
\Delta\theta_{\mathrm{X1},s}
=\Delta\theta_{\mathrm{X1},1}.
\label{eq:main-x1-unclipped}
\end{equation}
\end{corollary}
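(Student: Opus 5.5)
The plan is to write out the first unclipped-SGD torso update for X1 explicitly and observe that the reward scale $s$ can only enter through terms that X1's stop-gradient removes. Under unclipped SGD at the common checkpoint and fixed batch $\mathcal{B}$, the X1 torso update is
\begin{equation*}
\Delta\theta_{\mathrm{X1},s}=-\alpha_{\mathrm{lr}}\Bigl(\nabla_\theta L_\pi^{(s)}+c_V\,\nabla_\theta^{\mathrm{X1}} L_V^{(s)}+\nabla_\theta L_{\mathrm{other}}\Bigr),
\end{equation*}
where $\nabla_\theta^{\mathrm{X1}}$ denotes the gradient after the stop-gradient at the shared features. The blocking in X1 means exactly that $V_{\theta,\phi}$ is evaluated on $\mathrm{sg}(h_\theta(o))$. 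Hence the value loss still depends on $\phi$, and its forward value depends on $\theta$, but its contribution to $\partial/\partial\theta$ is identically zero for every $s$.

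The first step is therefore to show that $\nabla_\theta^{\mathrm{X1}}L_V^{(s)}=0$. This follows from the chain rule through the stop-gradient operator, whose Jacobian is zero by definition. As a consequence, the scale-dependent displacement of Lemma~\ref{lem:main-value-displacement}, $-\frac{1}{|\mathcal{B}|}\sum_b d_b^{(s)}\nabla_\theta V_{\theta,\phi}(o_b)$, never reaches $\theta$ in X1.

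The second step handles the remaining terms. Any entropy or auxiliary actor terms do not involve rewards, so they are independent of $s$. For the policy term, I will invoke the same standing assumption used for Theorem~\ref{thm:main-direct-route}: the arms share the checkpoint, the batch, and the \emph{normalized policy contribution}. In PPO the advantages are normalized per batch, so rescaling all advantages by a positive constant leaves the normalized advantages unchanged, with the $\epsilon$ in the normalization absorbed by the assumption. The policy gradient term is then identical at scales $s$ and $1$. If the critic's advantage estimates shift with $s$ beyond a pure scaling, because the critic is not retrained at the common checkpoint, the matched-policy condition is exactly what declares that contribution equal across scales. Summing the three terms gives $\Delta\theta_{\mathrm{X1},s}=\Delta\theta_{\mathrm{X1},1}$.

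A cross-check is to derive the result from Theorem~\ref{thm:main-direct-route}. That theorem says the whole X0-minus-X1 scale difference equals the direct-route term, and X0's own update differs across scales by precisely $-\alpha_{\mathrm{lr}}c_V$ times Lemma~\ref{lem:main-value-displacement}. Subtracting these two statements forces the X1 scale difference to be zero.

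The main obstacle is being precise about the policy contribution. The claim fails for unnormalized advantages, since those scale by $s$ and change the update, and it also requires the absence of clipping, adaptive moments, and gradient-norm clipping. I would state explicitly that only the fixed-batch, first-step, unclipped-SGD setting is covered, with the matched normalized policy term taken as a hypothesis rather than derived.
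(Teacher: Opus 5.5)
Your proposal is correct and follows the paper's proof: the stop-gradient zeroes X1's value-to-torso term, and the matched normalized-advantage condition makes the remaining policy and entropy torso gradient $g_{\pi,\theta}$ identical across scales, so the two updates coincide. Your cross-check via Theorem~\ref{thm:main-direct-route} is consistent but not independent, because both that theorem and your claim that X0's scale difference equals the pure value-displacement term already rely on this same cancellation of the common policy term.
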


\noindent
Corollary~\ref{cor:main-x1-unclipped} states that X1 removes the direct
scale-sensitive torso term while retaining the critic and its forward
computation.  X5 also lacks this term, but its Monte-Carlo policy estimator
makes it a reference rather than a second exact control.

\vspace{-0.1in}
\subsection{Optimizer Coupling}

Stopping the direct route does not make the whole optimizer scale invariant.
For global clipping threshold $G$, let
$\chi_s=1\wedge G/\lVert g_{\mathrm{X1},s}\rVert_2$, where
$g_{\mathrm{X1},s}$ is the full X1 gradient and $g_{\pi,\theta}$ is its common
policy-to-torso component.  The resulting X1 torso contrast is
\begin{proposition}[Clipping-mediated scale dependence]
\label{prop:main-clipping-coupling}
\begin{equation}
\Delta\theta_{\mathrm{X1},s}^{\mathrm{clip}}
-\Delta\theta_{\mathrm{X1},1}^{\mathrm{clip}}
=-\alpha_{\mathrm{lr}}(\chi_s-\chi_1)g_{\pi,\theta}.
\label{eq:main-clipping-coupling}
\end{equation}
\end{proposition}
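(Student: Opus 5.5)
The plan is to write out the clipped X1 torso update explicitly and then subtract the two scales. Under global-norm clipping with threshold $G$, the update applied to all parameters is $-\alpha_{\mathrm{lr}}\chi_s g_{\mathrm{X1},s}$, where $\chi_s=1\wedge G/\lVert g_{\mathrm{X1},s}\rVert_2$. The $\theta$-block of this update is therefore $-\alpha_{\mathrm{lr}}\chi_s\,[g_{\mathrm{X1},s}]_\theta$.

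The first step is to identify $[g_{\mathrm{X1},s}]_\theta$. In X1 the value gradient is stopped at the shared features, so the value loss contributes nothing to the torso block. We would invoke exactly the ingredient used in Corollary~\ref{cor:main-x1-unclipped}: under the matched-policy assumption (common checkpoint, common batch, normalized policy contribution), the policy-to-torso gradient does not depend on $s$. Hence $[g_{\mathrm{X1},s}]_\theta=g_{\pi,\theta}$ for every $s$. If the objective also carries an entropy term acting on the torso, we absorb it into $g_{\pi,\theta}$, since it is likewise scale-free at a common point.

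Subtracting then gives
\begin{equation*}
\Delta\theta^{\mathrm{clip}}_{\mathrm{X1},s}-\Delta\theta^{\mathrm{clip}}_{\mathrm{X1},1}
=-\alpha_{\mathrm{lr}}(\chi_s-\chi_1)g_{\pi,\theta},
\end{equation*}
which is the statement. Note that setting $\chi_s=\chi_1=1$ recovers Corollary~\ref{cor:main-x1-unclipped}.

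The only substantive point, and the step most likely to need care, is explaining why $\chi_s$ varies with $s$ even though the torso block does not. The reason is that the global norm $\lVert g_{\mathrm{X1},s}\rVert_2$ includes the value-head block $\nabla_\phi L_V^{(s)}$. By the same computation as Lemma~\ref{lem:main-value-displacement}, now taken with respect to $\phi$, this block depends on the targets $y_b^{(s)}$. So the scale enters only through the scalar factor $\chi_s$, and this is the clipping-mediated coupling the proposition names.

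We would also state the scope explicitly. The identity concerns a single first step, uses plain SGD with clipping rather than Adam, and assumes the policy advantage contribution is normalized so that it is scale-free. Without that normalization, $g_{\pi,\theta}$ itself would carry a factor of $s$.
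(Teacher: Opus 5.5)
Your proposal is correct and follows the paper's proof: the global clip scales every block of $g_{\mathrm{X1},s}$ by the common scalar $\chi_s$. The torso block is the scale-free $g_{\pi,\theta}$ (stop-gradient plus the matched-policy condition, with entropy absorbed as the paper does), so subtracting the scale-$1$ expression gives the identity. Your account of why $\chi_s$ varies, through the value-head block's contribution to the global norm, matches the paper's accompanying explanation.
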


\noindent
Proposition~\ref{prop:main-clipping-coupling} explains why X1 can remain
scale-sensitive even with zero value-to-torso gradient: the value head changes
the global norm and therefore rescales the policy step whenever clipping is
active for at least one compared scale.  Adam can preserve or reshape scale
dependence already present in torso gradients through its coordinatewise
history, but it does not by itself transmit a detached value-head gradient into
the torso.

\vspace{-0.1in}
\subsection{From Displacement to Behavioral Risk}

Fix a common policy head $\bar\psi$ and let $h(\theta,\bar\psi)$ be a local
cooperation margin, with $h\geq0$ denoting success.  Assume that $h$ is twice
continuously differentiable and that its Hessian operator norm is bounded by
$L_h$ on the update neighborhood.  Let $\varepsilon_\Delta$ be the largest
norm among the four compared torso updates and $R_h$ the Taylor remainder.
Then
\begin{corollary}[First-order maintenance-margin effect]
\label{cor:main-margin-effect}
\begin{equation}
\begin{aligned}
\mathcal{D}^{h}_s
&:=\big[h(\theta+\Delta\theta_{\mathrm{X0},s},\bar\psi)
-h(\theta+\Delta\theta_{\mathrm{X0},1},\bar\psi)\big]\\
&\quad-\big[h(\theta+\Delta\theta_{\mathrm{X1},s},\bar\psi)
-h(\theta+\Delta\theta_{\mathrm{X1},1},\bar\psi)\big]\\
&=\frac{\alpha_{\mathrm{lr}}c_V}{|\mathcal{B}|}
\nabla_\theta h(\theta,\bar\psi)^\top
\sum_{b\in\mathcal{B}}d_b^{(s)}
\nabla_\theta V_{\theta,\phi}(o_b)+R_h,\\
|R_h|&\leq2L_h\varepsilon_\Delta^2 .
\end{aligned}
\label{eq:main-margin-effect}
\end{equation}
\end{corollary}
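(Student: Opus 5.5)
The plan is to Taylor-expand $h(\cdot,\bar\psi)$ to second order around the common torso $\theta$ at each of the four updated points, then reduce the linear part to the torso contrast $\mathcal{D}^\theta_s$ of Theorem~\ref{thm:main-direct-route}. For each $(a,s')\in\{\mathrm{X0},\mathrm{X1}\}\times\{s,1\}$ I will write
\begin{equation*}
h(\theta+\Delta\theta_{a,s'},\bar\psi)=h(\theta,\bar\psi)+\nabla_\theta h(\theta,\bar\psi)^\top\Delta\theta_{a,s'}+r_{a,s'} .
\end{equation*}
Because $h$ is $C^2$ and its Hessian operator norm is bounded by $L_h$ on the update neighborhood, the integral (or Lagrange) form of the remainder gives $|r_{a,s'}|\le \tfrac12 L_h\lVert\Delta\theta_{a,s'}\rVert^2\le\tfrac12 L_h\varepsilon_\Delta^2$. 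This step also needs the segments $\theta+t\Delta\theta_{a,s'}$, $t\in[0,1]$, to lie in that neighborhood, which I take to be the intended meaning of ``update neighborhood''.

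Next I substitute these expansions into the double difference defining $\mathcal{D}^h_s$. The four copies of $h(\theta,\bar\psi)$ carry signs $+,-,-,+$ and cancel. The linear terms combine to $\nabla_\theta h(\theta,\bar\psi)^\top\big[(\Delta\theta_{\mathrm{X0},s}-\Delta\theta_{\mathrm{X0},1})-(\Delta\theta_{\mathrm{X1},s}-\Delta\theta_{\mathrm{X1},1})\big]$, and the bracket is exactly $\mathcal{D}^\theta_s$. Invoking Theorem~\ref{thm:main-direct-route} under the same hypotheses (common checkpoint and batch, matched normalized policy contribution, first unclipped-SGD step) replaces $\mathcal{D}^\theta_s$ by $\frac{\alpha_{\mathrm{lr}}c_V}{|\mathcal{B}|}\sum_b d_b^{(s)}\nabla_\theta V_{\theta,\phi}(o_b)$. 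This yields the stated leading term.

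Finally I set $R_h=r_{\mathrm{X0},s}-r_{\mathrm{X0},1}-r_{\mathrm{X1},s}+r_{\mathrm{X1},1}$. The triangle inequality gives $|R_h|\le 4\cdot\tfrac12 L_h\varepsilon_\Delta^2=2L_h\varepsilon_\Delta^2$, which is the claimed constant.

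The only step needing care is the hypothesis bookkeeping. The corollary must inherit the matched-policy and unclipped-SGD conditions of Theorem~\ref{thm:main-direct-route}, since otherwise the linear term would also contain a clipping contribution as in Proposition~\ref{prop:main-clipping-coupling}. It must also fix the policy head at $\bar\psi$ for all four evaluations, so that only torso displacements enter. The remainder estimate itself is routine once convexity of the neighborhood, or at least containment of the four segments, is granted.
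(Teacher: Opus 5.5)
Your proposal is correct and follows essentially the same argument as the paper: Taylor-expand the four margin terms at fixed $\bar\psi$, cancel the constant terms, identify the linear part with the torso difference-in-differences given by the direct-route theorem, and bound each remainder by $L_h\varepsilon_\Delta^2/2$ before applying the triangle inequality over the four terms. Your explicit requirement that the segments $\theta+t\Delta\theta_{a,s'}$ lie in the neighborhood where the Hessian bound holds is a useful precision that the paper leaves implicit.
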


\noindent
Corollary~\ref{cor:main-margin-effect} separates route existence from
behavioral harm.  The direct route lowers a local cooperation margin only when
its displacement is adversely aligned beyond the approximation remainder; a
maintenance event additionally requires enough movement to cross the available
margin.  The result predicts a conditional local mechanism, not inevitable
failure from reward scaling or critic learning.

\vspace{-0.1in}
\subsection{Matched Learning Arms}

\paragraph{X0: direct value-to-torso route.}
The policy and value heads share the actor torso.  The combined loss updates
the torso, so value-to-torso gradients are generally nonzero.  Reward scaling
can therefore alter actor features through the shared value route even when
the matched policy-gradient contribution is unchanged.

\paragraph{X1: stop-gradient route control.}
The critic remains present and its head continues to learn, but the value
branch receives detached actor features.  Thus the value-head gradient is
nonzero while the direct value-to-torso gradient is exactly zero.  X0 versus
X1 isolates direct access under unclipped SGD; global clipping can retain an
indirect scale dependence through the critic's contribution to the full
gradient norm.

\paragraph{X5: no learned critic.}
X5 is an actor-only Monte-Carlo reference.  It removes both the direct route and
the learned value head.  Because its return estimator differs from X0/X1, X5
is a reference for critic-free optimization rather than a second exact
stop-gradient control.

The contrasts therefore answer different questions.  X0 versus X1 isolates
direct value-gradient access while retaining the critic and forward
architecture.  X1 versus X5 measures what remains when a critic is trained
without torso access, including differences in return estimation and, when
present, global clipping.  Failure common to X1 and X5 is not route-specific.

\begin{table}[t]
\centering
\scriptsize
\setlength{\tabcolsep}{3.5pt}
\begin{tabular}{lcccc}
\toprule
Arm & Critic & Value head & Value$\rightarrow$torso & Actor updates\\
\midrule
X0 & learned & yes & yes & yes\\
X1 & learned & yes & \textbf{no} & yes\\
X5 & none & n/a & n/a & yes\\
\bottomrule
\end{tabular}
\caption{\textbf{Matched routing arms.} X0 and X1 share the critic, forward
pass, and actor-update schedule, differing only in value-gradient access to
the shared torso; X5 is a critic-free reference.}
\label{tab:arms}
\end{table}

\vspace{-0.1in}
\section{Experiments}

\subsection{Experimental Settings}

\subsubsection{Environments and Metrics}
We evaluate cooperation maintenance in two structurally distinct multi-agent
environments, each with an environment-specific cooperation criterion and a
common temporal evaluation protocol:
\begin{itemize}
    \item \textbf{\minex{}.} A two-agent repeated production-and-exchange
    game with complementary resource advantages.  We use three profitable,
    incentive-compatible contexts spanning wide to narrow strategic margins
    (Table~\ref{tab:minexcontexts}); frozen evaluations assess specialization,
    exchange, resource flow, and return gain.

    \item \textbf{\cleanup{}.} A compact two-agent reconstruction of the
    CleanUp public-resource dilemma
    \citep{hughes2018inequity}, where cleaning sacrifices immediate harvest to
    preserve future resources.  A behavior-cloned scripted teacher provides
    the cooperative warm start; frozen evaluations require sustained return,
    apple collection, and pollution control.
\end{itemize}

Complete dynamics and operational success thresholds appear in Supplementary
Material Sections~2--3.

\begin{table}[t]
\centering
\small
\begin{tabular}{lrrrrr}
\toprule
Context & $H/L$ & $c$ & $q$ & $m_{\mathrm{IC}}$ & $m_{\mathrm{profit}}$\\
\midrule
H1 & 4.0 & 0.15 & 0.90 & 0.416 & 0.844\\
H2 & 2.5 & 0.40 & 0.90 & 0.191 & 0.178\\
H3 & 2.2 & 0.20 & 0.65 & 0.015 & 0.242\\
\bottomrule
\end{tabular}
\caption{\textbf{\minex{} strategic contexts.} All contexts support profitable,
incentive-compatible cooperation; H3 has the narrowest incentive margin.
$m_{\mathrm{IC}}$ and $m_{\mathrm{profit}}$ denote the incentive-compatibility
and profitability margins.}
\label{tab:minexcontexts}
\end{table}

All experimental cells use the frozen temporal endpoint defined above.
Evaluations use raw, unscaled rewards; survival and RMST summarize sustained
maintenance, while return and behavioral trajectories retain transient
failures and recoveries.

\subsubsection{Learning Configurations and Baselines}
We use the routing arms in Table~\ref{tab:arms}.  X0 versus X1 is the primary
matched contrast because it changes direct value-to-torso access while
retaining the critic and its forward computation.  X5 removes the learned
critic and uses a Monte-Carlo policy estimator, so it tests critic-free
optimization without serving as a second exact routing control.

The confirmatory \minex{} and boundary \cleanup{} cells use globally clipped
Adam; the opposite cells use unclipped SGD.  Adam retains clipping-mediated
coupling, whereas SGD provides the cleanest direct-route X0/X1 comparison.
Supplementary Material Sections~6 and 8.1 give the clipping analysis and
implementation details.

Positive reward scales are applied only during training. Because all scales
are strictly positive, they preserve fixed-policy strategic orderings while
changing value targets and optimizer dynamics. The X0/X1 comparison is the
primary routing contrast; the X5 arm tests whether observed failure also
arises without a learned critic.

\subsubsection{Training and Evaluation Protocol}
Every arm--scale pair within a seed loads byte-identical actor parameters and
uses matched initial evaluation episodes. X0 and X1 use the same critic
initialization and critic warm-up, during which actor parameters remain
frozen. Training targets use scaled rewards, whereas all reported evaluation
returns and cooperation thresholds use raw rewards. Reward scale therefore
cannot mechanically alter the evaluation criterion.

Training alternates on-policy collection, learning updates, and frozen
evaluation. Runs continue after confirmed events, and later training data
diverge with the policies despite matched initialization. Gradient audits use
the production update path. Supplementary Material Section~8.2 specifies the
step-0, randomness, checkpoint, and recovery controls.

\subsubsection{Local Behavioral-Margin Audit}
The local audit asks whether the measured value-to-torso route points toward a
behavioral failure boundary, rather than treating parameter movement itself as
harm. At frozen cooperative actors, we compare matched X0 and X1 production
updates at low and high reward scales. We then perturb only the shared torso
along the resulting route-by-scale and high-scale X0--X1 directions, keeping
the policy head fixed, and measure how cooperation margins change in both
directions. Repeating the audit across checkpoint or context variants,
training batches, and evaluation seeds separates a consistent local direction
from sampling noise. Because exact greedy behavior in \minex{} is locally
piecewise constant, its differentiable audit uses a cooperative soft-greedy
surrogate.

Table~\ref{tab:studies} summarizes the experimental cells. The confirmatory
cells test routing-specific scale effects in distinct environments and
optimizers, while the opposite optimizer--environment cells provide boundary
conditions. Exposure units are analyzed within cells rather than pooled.

\begin{table*}[t]
\centering
\small
\begin{tabular}{llclrrl}
\toprule
Environment & Optimizer & Clipping & Role & Runs & Horizon & Primary summary\\
\midrule
\minex{} & Adam & global norm 0.5 & confirmatory & 900 & 196,608 env. steps & hazard / RMST\\
\cleanup{} & SGD & none & confirmatory & 300 & 200 updates & paired RMST slope\\
\minex{} & SGD & none & boundary pilot & 225 & 196,608 env. steps & raw events / actor activity\\
\cleanup{} & Adam & global norm 0.5 & boundary & 300 & 200 updates & raw events / RMST\\
\bottomrule
\end{tabular}
\caption{\textbf{Experimental evidence matrix.} Confirmatory cells test
routing-specific scale effects across environments and optimizers; the
opposite pair probes all-arm survival or failure.}
\label{tab:studies}
\end{table*}

The confirmatory studies cross all three learning configurations with the
prespecified reward scales and fresh seeds. \minex{} uses Adam with global
clipping across all three strategic contexts, whereas \cleanup{} uses
unclipped SGD with paired seeds. Both studies evaluate the common initial
checkpoint and the full maintenance horizon.

\subsubsection{Statistical Analysis}
The statistical tests follow the intervention: reward scale should increase
maintenance risk more strongly when value gradients directly enter actor
features. For \minex{}, a discrete-time hazard model tests the X0 scale trend
and its contrasts with X1 and X5 while accounting for training time and
strategic context. For \cleanup{}, paired tests compare how each arm's RMST
changes with scale. We correct each hypothesis family for multiple testing.
When an arm has no observed events, we report survival and finite RMST
contrasts instead of forcing an unstable hazard estimate. The Supplementary
Material gives endpoint and environment details in Sections~2--3, audit
construction in Section~7, and implementation and statistical specifications
in Sections~8.1--8.3.

\vspace{-0.1in}
\subsection{Main Results}

The main experiments test whether reward scaling selectively destabilizes X0,
which exposes actor features to value gradients, relative to the matched X1
control and critic-free X5 reference.

\paragraph{\minex{} under Adam.}

\begin{figure}[t]
\centering
\includegraphics[width=0.99\columnwidth]{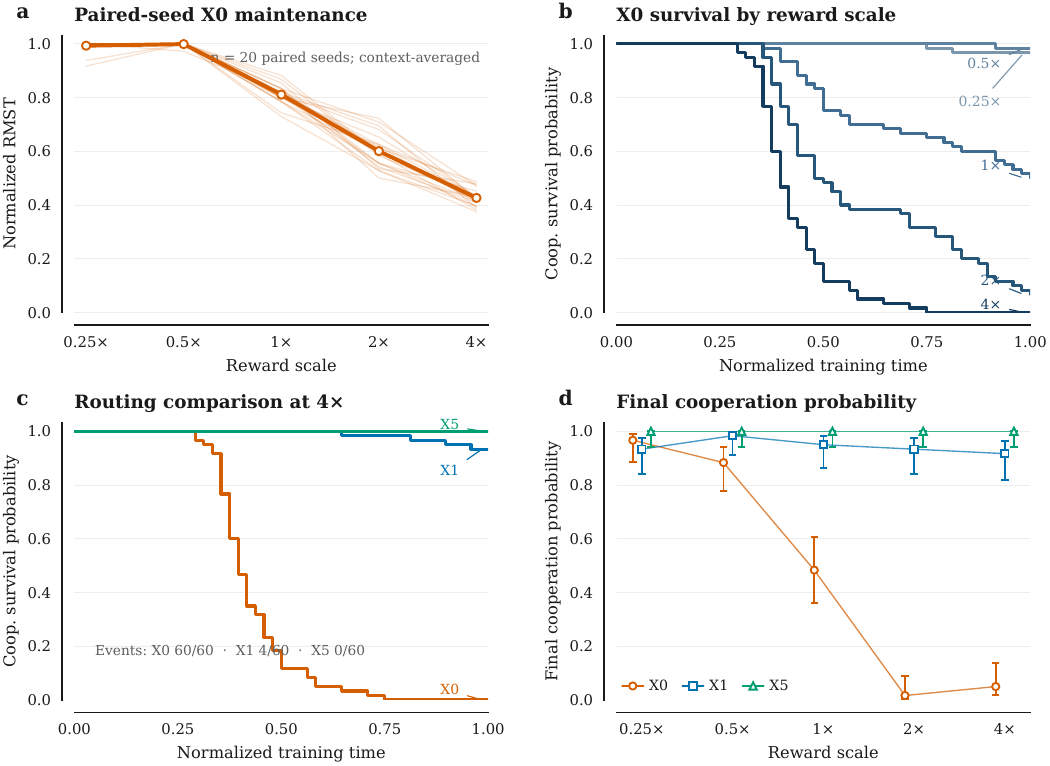}
\caption{\textbf{\minex{} / Adam confirmatory dynamics.}
\textbf{(a)} X0 RMST falls with scale across contexts.
\textbf{(b)} X0 failures become earlier and more frequent.
\textbf{(c)} At maximum scale, loss concentrates in X0; X1 approaches the
censoring ceiling and X5 has no confirmed events.
\textbf{(d)} Endpoint cooperation obscures failure timing.}
\label{fig:minexresults}
\end{figure}

Figure~\ref{fig:minexresults} shows a pronounced routing-by-scale interaction.
X0 remains cooperative at low scales but loses cooperation progressively as
scale increases. X1 stays near the censoring ceiling, and X5 has no confirmed
event. The hazard analysis confirms both the adverse X0 scale trend and its
difference from X1 after accounting for strategic context; descriptive X0
trends appear in all three contexts. Thus, under Adam, scale-sensitive
maintenance loss concentrates in the arm with the direct value-to-torso route.

\paragraph{\cleanup{} under unclipped SGD.}

\begin{figure}[t]
\centering
\includegraphics[width=0.99\columnwidth]{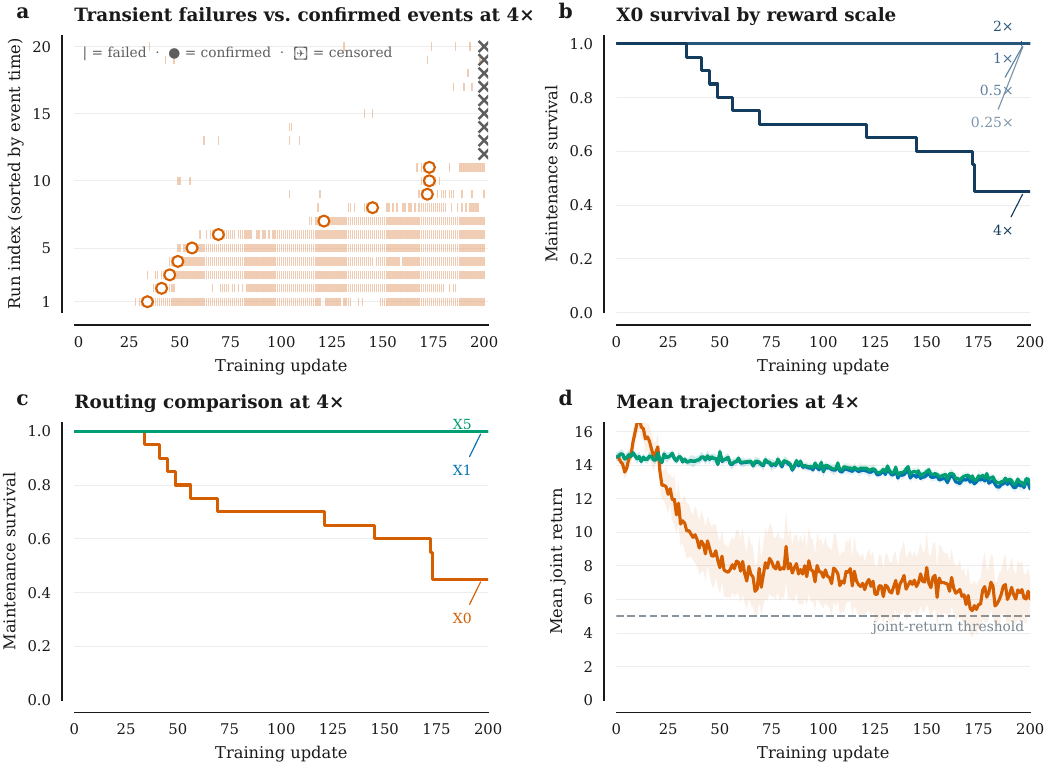}
\caption{\textbf{\cleanup{} / unclipped-SGD confirmatory dynamics.}
\textbf{(a)} The high-scale raster distinguishes transient failures,
confirmed events, and censoring.  \textbf{(b,c)} At maximum scale, survival
deteriorates and confirmed failures appear only in X0.
\textbf{(d)} Raw joint return reveals pre-event deterioration.}
\label{fig:cleanupresults}
\end{figure}

\cleanup{} produces a sharper high-scale boundary
(Figure~\ref{fig:cleanupresults}). X0 remains cooperative at lower scales but
develops sustained failures at the highest scale. Earlier isolated failures
sometimes recover, showing why a consecutive-failure endpoint is more
informative than first failure alone. X1 and X5 remain event-free across
scales and retain high raw joint return. Paired tests support the X0 scale
effect and both arm contrasts after multiple-testing correction. Within this
confirmatory \cleanup{}--SGD cell, the scale-sensitive loss is confined to X0,
where value gradients directly enter actor features.

\vspace{-0.1in}
\subsection{Mechanism and Robustness Diagnostics}

The outcome contrast identifies a routing-specific risk, but the mechanism
claim additionally requires evidence that the intended gradient route exists
and points in a behaviorally adverse direction.

\paragraph{Direct gradient audit and optimizer coupling.}

\begin{figure}[t]
\centering
\includegraphics[width=\columnwidth]{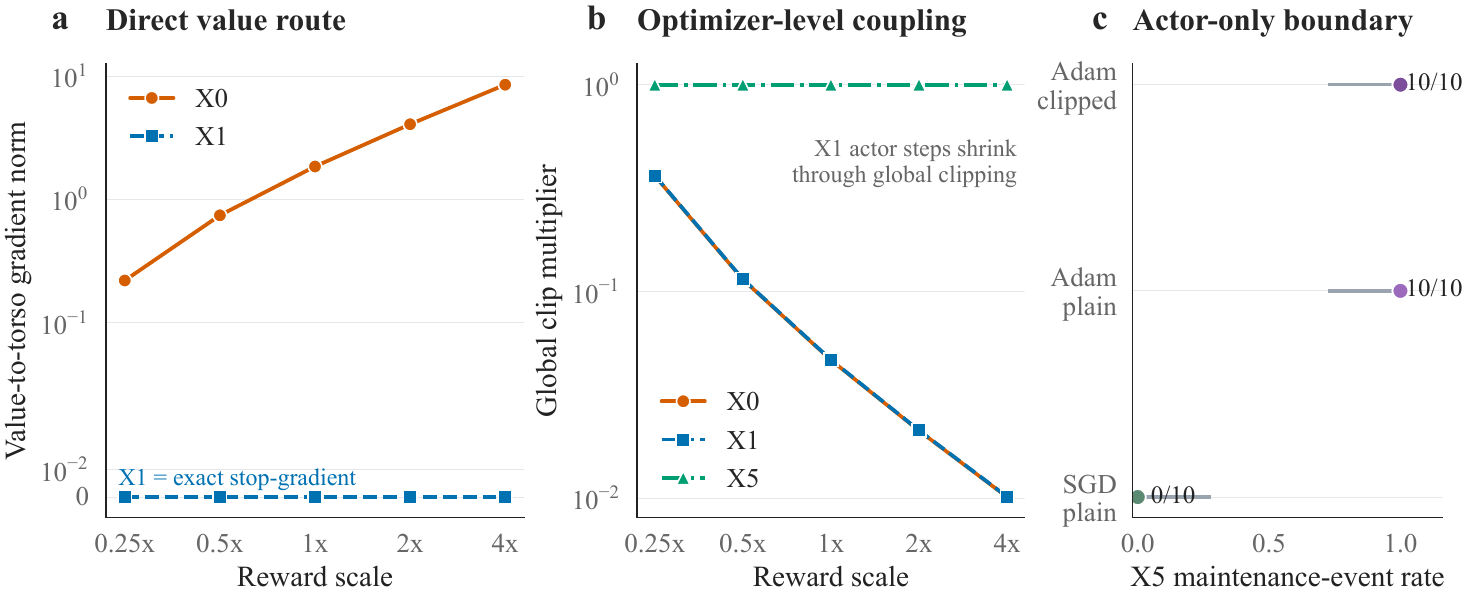}
\caption{\textbf{Mechanism diagnostics.}
\textbf{(a)} Direct value-to-torso gradients scale in X0 and vanish in X1.
\textbf{(b)} Under global clipping, critic gradients still shrink X1 actor
steps.  \textbf{(c)} X5 fails under Adam but not SGD, revealing
route-independent optimizer effects.}
\label{fig:mechanism}
\end{figure}

The production-path audit in Figure~\ref{fig:mechanism} verifies the treatment
directly: X0 has nonzero value-to-torso gradient, while X1 has exactly zero
gradient along that route despite an active value head.  Their actor parameters
diverge after one update.  The clipping audit also shows why stop-gradient is
not complete optimization independence: value-head gradients still enter the
global norm and can rescale X1's policy update.  The unclipped \cleanup{} study
removes this indirect route, yielding the cleanest X0/X1 comparison.

\paragraph{Behavioral-margin alignment.}
The local audit supports adverse alignment in \cleanup{} but yields only
partial evidence in \minex{} under Adam. In \cleanup{}, both the
route-by-scale direction and the high-scale X0--X1 direction lower every
active greedy cooperation margin. Perturbing the full actor produces nearly
the same response as perturbing only the torso, which localizes the adverse
effect to the shared representation. In \minex{}, exact greedy behavior is
locally flat. Its cooperative soft-greedy surrogate shows a partial,
contrast-dependent result: the high-scale X0--X1 direction lowers the flow
margin, but evidence for return and specialization is inconclusive, and the
Adam route-by-scale direction is locally null. These local directions diagnose
risk; they do not by themselves establish that a perturbation crosses the
failure threshold. Full estimates, intervals, and linearity checks appear in
the Supplementary Material.

\paragraph{Local cooperative-checkpoint neighborhood.}

\begin{figure}[t]
\centering
\includegraphics[width=\columnwidth]{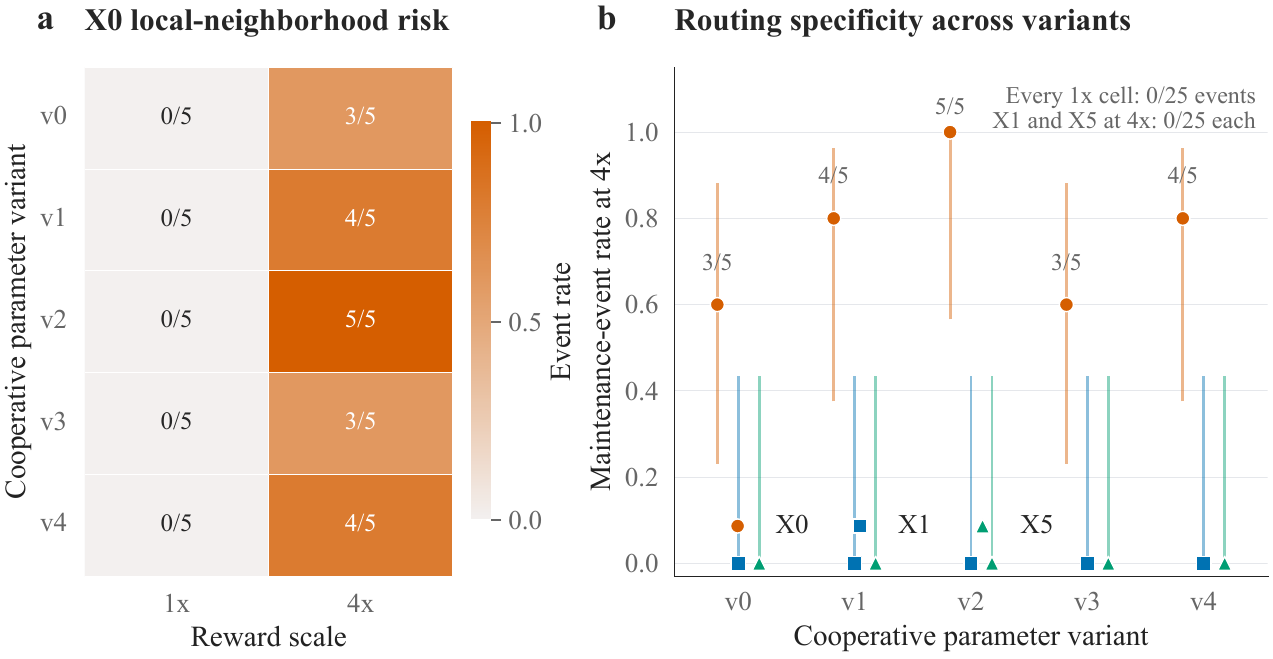}
\caption{\textbf{Local warm-start robustness in \cleanup{} / SGD.}
\textbf{(a)} Across the base checkpoint and four local cooperative
perturbations, X0 fails frequently only at high scale.
\textbf{(b)} High-scale failures remain exclusive to X0; X1 and X5 remain
event-free.  Variants are perturbations, not independently trained policies.}
\label{fig:variants}
\end{figure}

The main studies use one actor to maximize causal matching. To test whether
the result depends on that exact checkpoint, we repeat the comparison at
four nearby cooperative parameter perturbations. At high scale, maintenance
loss again concentrates in X0, whereas X1 and X5 remain event-free
(Figure~\ref{fig:variants}); at low scale, all arms remain event-free over the
tested horizon. The result is therefore robust within a local cooperative
neighborhood, although these
perturbations are not independently trained solutions.

\vspace{-0.1in}
\subsection{Boundary Conditions}

The final experiments test whether direct value routing is universally
sufficient or necessary for maintenance failure.

\begin{figure}[t]
\centering
\includegraphics[width=\columnwidth]{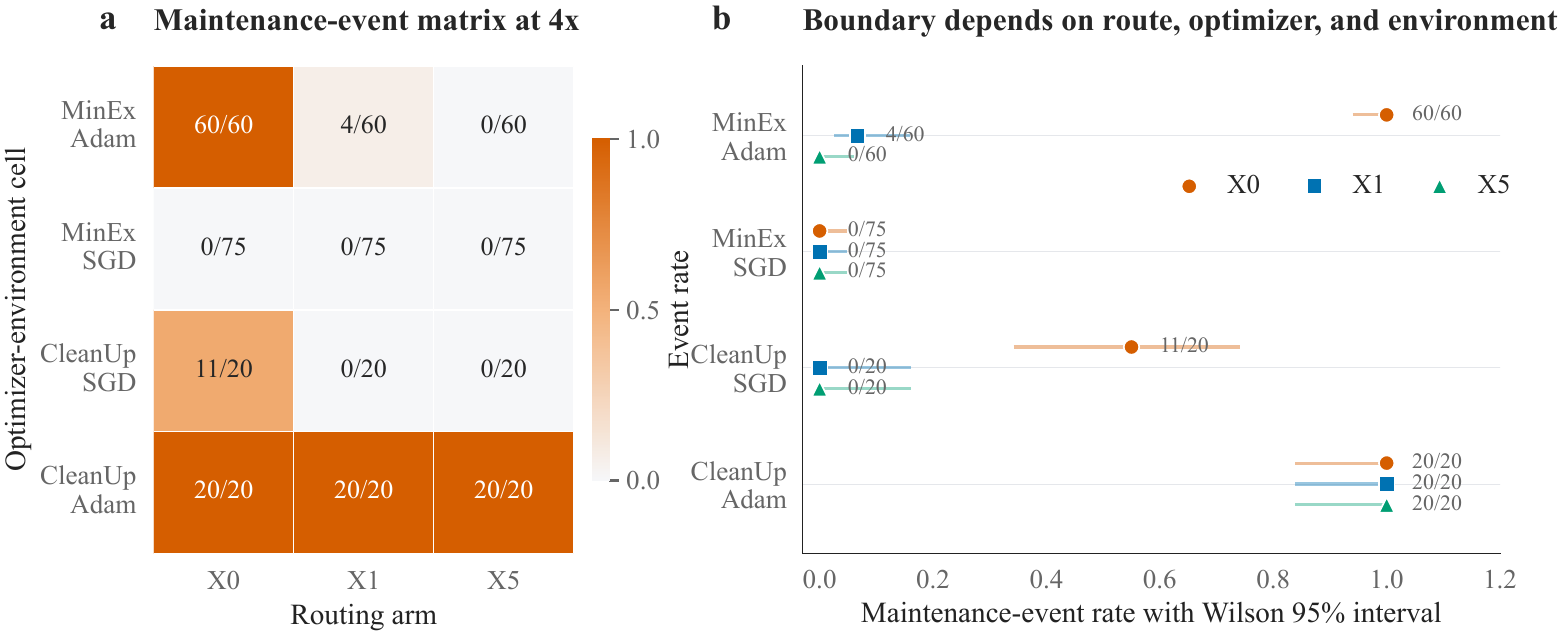}
\caption{\textbf{Optimizer--environment boundary at reward scale 4.}
\textbf{(a)} Event rates show routing-specific failure, all-arm survival in
\minex{}--SGD, and all-arm failure in \cleanup{}--Adam.
\textbf{(b)} Wilson intervals confirm the pattern; direct value routing is
neither necessary nor sufficient within this matrix.}
\label{fig:matrix}
\end{figure}

The boundary matrix in Figure~\ref{fig:matrix} prevents a universal causal
interpretation.  With unclipped SGD, every \minex{} run remains event-free
despite continuing parameter updates.  Conversely, \cleanup{} with Adam fails
in every arm, including X5, so neither a learned critic nor direct value
routing is necessary in that cell.  The routing effect is therefore
conditional on both optimizer and environment across the tested matrix.

\vspace{-0.1in}
\section{Discussion and Conclusion}

\noindent\textbf{Interpretation, limitations, and conclusion.}
The matched design verifies direct value-to-torso routing, and confirmatory
contrasts show that scale-sensitive maintenance loss concentrates in X0.  The
local audit supplies the behavioral condition: adverse alignment is clear in
\cleanup{}--SGD but weaker and contrast-dependent in \minex{}--Adam.  Route
existence, adverse alignment, sufficient movement, and a confirmed event are
therefore distinct evidentiary steps.  Positive scaling preserves strategic
comparisons but changes maintenance time, so game stability does not certify
learning stability.  Our evidence is limited to two compact environments,
including a reconstructed \cleanup{}, and to verified warm starts rather than
cooperation discovery.  Checkpoint perturbations are local diagnostics, not
independently trained solutions: the audits estimate derivatives rather than
threshold crossings; the \minex{} surrogate has curvature that limits
extrapolation; and gradient and clipping audits do not fully capture later
on-policy trajectories.  Operational event definitions, finite-horizon
survival, and complete separation provide only finite descriptive evidence.
Within these limits, continued optimization can erase verified cooperation
without changing strategic ordering when value-gradient displacement is
adversely aligned and reaches a failure boundary.  Direct value-gradient
access is thus a conditional, scale-sensitive risk shaped by the optimizer
and environment.  MARL evaluations should report acquisition and survival
trajectories and treat routing, reward scale, clipping, and optimizer as
algorithmic choices.

\bibliography{references}

\clearpage
\setcounter{section}{0}
\setcounter{secnumdepth}{2}
\captionsetup{font=small,labelfont=bf}
\setlength{\columnsep}{0.25in}
\setlist[itemize]{nosep,leftmargin=1.4em}
\setlist[enumerate]{nosep,leftmargin=1.7em}
\setcounter{secnumdepth}{2}
\onecolumn
\raggedbottom
\raggedcolumns
\begin{center}
{\LARGE\bfseries Supplementary Material\par}
\vspace{6pt}
{\Large\bfseries After Cooperation Is Learned: Gradient Routing and\\
Optimizer-Dependent Maintenance in Multi-Agent Reinforcement Learning\par}
\vspace{12pt}
{\large Chaoyuan Hao, Wentao Yue, Tianyou Lai,\\
Hongji Li, Jiayi Zhou, Qingyu Mao, Qilei Li\par}
\end{center}
\vspace{14pt}

\begin{multicols}{2}
\section{Appendix Overview}

This appendix records the operational definitions, additional numerical
results, implementation controls, and artifact provenance needed to interpret
the main paper.  It intentionally distinguishes confirmatory studies from
pilots and local robustness diagnostics.  Legacy \cleanup{} CL-8 artifacts are
excluded from every table and figure; the paper uses the recovered and
re-executed CL-9 studies only.

\section{Maintenance Endpoint in Detail}
\label{sec:supp-endpoint}

For run $r$ and evaluation checkpoint $k$, let
$C_{rk}\in\{0,1\}$ denote environment-specific success after averaging the
frozen evaluation episodes.  True step 0 must satisfy $C_{r0}=1$.  Define the
first failure checkpoint
\begin{equation}
F_r=\min\{k>0:C_{rk}=0\}.
\end{equation}
For confirmation width $w=5$, the maintenance-event checkpoint is
\begin{equation}
E_r=\min\{k\geq w:C_{r,k-w+1}=\cdots=C_{rk}=0\}.
\end{equation}
If no such checkpoint occurs before horizon $\tau$, the run is right-censored.
RMST is computed from the Kaplan--Meier survival function through the frozen
horizon.  Joint-return AUC is integrated against the actual training-time
coordinate, not checkpoint index.

\paragraph{\minex{} success.}
The deterministic evaluation conjunction is:
\begin{enumerate}
    \item specialization score $\geq 0.80$;
    \item specialization directions match comparative advantage;
    \item balanced exchange $\geq 0.75$;
    \item resource flow $\geq 0.60$ of cooperative-oracle flow;
    \item normalized return gain over autarky $\geq 0.60$.
\end{enumerate}

\paragraph{\cleanup{} success.}
Each checkpoint averages 64 raw-reward episodes and succeeds iff
\begin{equation}
\overline{J}_{\mathrm{joint}}\geq5,\qquad
\overline{P}_{\mathrm{pollution}}\leq3.16,\qquad
\overline{N}_{\mathrm{apples}}\geq5.
\end{equation}
The threshold 3.16 is the frozen median from the CL-2F calibration.  An older
YAML value of 3.59 is deprecated and is not used in CL-9.

\section{Environment Details}
\label{sec:supp-environments}

\paragraph{\minex{} dynamics.}
Agent 1 has productivity $(H,L)$ and agent 2 has $(L,H)$, where $H>L>0$.
Each round, agents select a production allocation
$x_i\in\{0,0.25,0.5,0.75,1\}$ and transfer fractions for both goods in
$\{0,0.5,1\}$.  A transferred unit loses fraction $c$, and post-transfer
consumption $(a_i,b_i)$ gives utility $u_i=\sqrt{a_i b_i}$.  Autarkic
production yields $P=\sqrt{HL}/2$, while symmetric full specialization with
half of the advantage good transferred yields $R=H\sqrt{1-c}/2$.  If $T$ is
the exact one-shot best-deviation payoff and $\delta=\gamma q$ is the
effective relationship discount, grim-trigger cooperation is incentive
compatible when
\begin{equation}
\delta\geq\delta^\star=\frac{T-R}{T-P}.
\end{equation}

\paragraph{\cleanup{} dynamics.}
The reconstruction uses a two-agent $5\times4$ grid and a 64-step horizon.
Agents can move, harvest apples, clean river pollution, or zap.  Dirt appears
with probability 0.35, zaps succeed with probability 0.9, and a successful zap
freezes its target for five steps.  A scripted cooperative teacher and
behavior cloning produce the warm start, which succeeds in 99.4\% of 512
held-out frozen evaluations.
\end{multicols}

\section{Additional \minex{} Results}

\subsection{Full Arm--Scale Survival Table}

\begin{table}[H]
\centering
\small
\setlength{\tabcolsep}{3.5pt}
\caption{\minex{} / Adam confirmatory survival summary.  Higher reward scales
increase X0 event rates and shorten its maintenance, whereas X1 remains near
the censoring ceiling and X5 has no events.  RMST is normalized by the
196,608-step horizon.}
\label{tab:app-minexfull}
\begin{tabular}{lrrrrrrr}
\toprule
Arm & Scale & Events & Censored & Event rate & RMST & First-failure rate & Final cooperation\\
\midrule
X0 & 0.25 & 2 & 58 & 0.033 & 0.9927 & 0.033 & 0.967\\
X0 & 0.50 & 1 & 59 & 0.017 & 0.9986 & 0.083 & 0.883\\
X0 & 1.0 & 30 & 30 & 0.500 & 0.8108 & 0.533 & 0.483\\
X0 & 2.0 & 56 & 4 & 0.933 & 0.6003 & 0.967 & 0.017\\
X0 & 4.0 & 60 & 0 & 1.000 & 0.4264 & 1.000 & 0.050\\
\midrule
X1 & 0.25 & 3 & 57 & 0.050 & 0.9958 & 0.050 & 0.933\\
X1 & 0.50 & 1 & 59 & 0.017 & 0.9938 & 0.017 & 0.983\\
X1 & 1.0 & 2 & 58 & 0.033 & 0.9903 & 0.050 & 0.950\\
X1 & 2.0 & 4 & 56 & 0.067 & 0.9913 & 0.067 & 0.933\\
X1 & 4.0 & 4 & 56 & 0.067 & 0.9885 & 0.083 & 0.917\\
\midrule
X5 & 0.25 & 0 & 60 & 0.000 & 1.0000 & 0.000 & 1.000\\
X5 & 0.50 & 0 & 60 & 0.000 & 1.0000 & 0.000 & 1.000\\
X5 & 1.0 & 0 & 60 & 0.000 & 1.0000 & 0.000 & 1.000\\
X5 & 2.0 & 0 & 60 & 0.000 & 1.0000 & 0.000 & 1.000\\
X5 & 4.0 & 0 & 60 & 0.000 & 1.0000 & 0.000 & 1.000\\
\bottomrule
\end{tabular}
\end{table}

\begin{multicols}{2}
\subsection{Context Effects}

\begin{table}[H]
\centering
\small
\caption{X0 scale sensitivity appears in all three strategic contexts despite
their different incentive margins.  Each context pools five scales and 20
seeds per scale.}
\label{tab:app-context}
\begin{tabular}{lrrrr}
\toprule
Context & Events/100 & RMST & Scale slope & SE\\
\midrule
H1 & 60/100 & 0.665 & 1.935 & 0.191\\
H2 & 44/100 & 0.767 & 2.776 & 0.311\\
H3 & 45/100 & 0.865 & 1.957 & 0.258\\
\bottomrule
\end{tabular}
\end{table}

The H3 incentive margin is only 0.0147, yet its X0 scale slope is close to that
of H1, whose incentive margin is 0.416.  This does not imply incentive margin
is irrelevant; it shows that the confirmatory scale effect is not attributable
only to one context.
\end{multicols}

\section{Additional \cleanup{} Results}

\subsection{Full Confirmatory Table}

\begin{table}[H]
\centering
\small
\setlength{\tabcolsep}{3.5pt}
\caption{\cleanup{} / unclipped-SGD confirmatory summary.  Confirmed failures
occur only for high-scale X0; X1 and the critic-free X5 reference remain
event-free across scales.}
\label{tab:app-cleanupfull}
\begin{tabular}{lrrrrrrr}
\toprule
Arm & Scale & Events & Event rate & Mean RMST & Joint AUC & First-failure rate & Recovery rate\\
\midrule
X0 & 0.25 & 0/20 & 0.00 & 200.0 & 13.764 & 0.00 & 0.00\\
X0 & 0.50 & 0/20 & 0.00 & 200.0 & 13.751 & 0.00 & 0.00\\
X0 & 1.0 & 0/20 & 0.00 & 200.0 & 13.651 & 0.00 & 0.00\\
X0 & 2.0 & 0/20 & 0.00 & 200.0 & 13.672 & 0.00 & 0.00\\
X0 & 4.0 & 11/20 & 0.55 & 143.9 & 8.389 & 0.90 & 0.40\\
\midrule
X1 & 0.25 & 0/20 & 0.00 & 200.0 & 13.770 & 0.00 & 0.00\\
X1 & 0.50 & 0/20 & 0.00 & 200.0 & 13.771 & 0.00 & 0.00\\
X1 & 1.0 & 0/20 & 0.00 & 200.0 & 13.772 & 0.00 & 0.00\\
X1 & 2.0 & 0/20 & 0.00 & 200.0 & 13.773 & 0.00 & 0.00\\
X1 & 4.0 & 0/20 & 0.00 & 200.0 & 13.776 & 0.00 & 0.00\\
\midrule
X5 & 0.25 & 0/20 & 0.00 & 200.0 & 13.877 & 0.00 & 0.00\\
X5 & 0.50 & 0/20 & 0.00 & 200.0 & 13.877 & 0.00 & 0.00\\
X5 & 1.0 & 0/20 & 0.00 & 200.0 & 13.877 & 0.00 & 0.00\\
X5 & 2.0 & 0/20 & 0.00 & 200.0 & 13.877 & 0.00 & 0.00\\
X5 & 4.0 & 0/20 & 0.00 & 200.0 & 13.877 & 0.00 & 0.00\\
\bottomrule
\end{tabular}
\end{table}

\begin{multicols}{2}
\subsection{Local Checkpoint-Neighborhood Robustness}

\begin{table}[H]
\centering
\small
\caption{Across the base checkpoint and four local variants, high-scale
failures remain confined to X0; X1 and X5 have no events.  Each cell uses five
seeds.}
\label{tab:app-variants}
\begin{tabular}{lrrr}
\toprule
Checkpoint & X0 events & X1 events & X5 events\\
\midrule
Base & 3/5 & 0/5 & 0/5\\
Perturbation 1 & 4/5 & 0/5 & 0/5\\
Perturbation 2 & 5/5 & 0/5 & 0/5\\
Perturbation 3 & 3/5 & 0/5 & 0/5\\
Perturbation 4 & 4/5 & 0/5 & 0/5\\
\midrule
Total & 19/25 & 0/25 & 0/25\\
\bottomrule
\end{tabular}
\end{table}

At scale 1, all 75 variant runs are event-free.  The perturbations are local
changes around the cooperative checkpoint and should not be described as
independently trained checkpoints.
\end{multicols}

\subsection{Adam Boundary}

\begin{table}[H]
\centering
\scriptsize
\caption{\cleanup{} / Adam boundary: all cells have 20/20 events.  RMST and
joint AUC show that identical event rates do not imply identical dynamics.}
\label{tab:app-adam}
\begin{tabular}{lrrrrrrrrrr}
\toprule
& \multicolumn{5}{c}{Mean RMST by scale} & \multicolumn{5}{c}{Mean joint AUC by scale}\\
\cmidrule(lr){2-6}\cmidrule(lr){7-11}
Arm & 0.25 & 0.5 & 1 & 2 & 4 & 0.25 & 0.5 & 1 & 2 & 4\\
\midrule
X0 & 32.4 & 32.0 & 50.2 & 59.9 & 30.9 & 7.01 & 6.38 & 5.63 & 4.91 & 4.03\\
X1 & 26.4 & 25.3 & 21.4 & 20.8 & 21.0 & 7.39 & 7.01 & 6.96 & 6.20 & 5.57\\
X5 & 21.4 & 21.4 & 21.4 & 21.4 & 21.4 & 4.79 & 4.82 & 4.82 & 4.82 & 4.82\\
\bottomrule
\end{tabular}
\end{table}

\section{Mechanism and Clipping Audit}
\label{sec:supp-mechanism}

\begin{table}[H]
\centering
\small
\caption{Production-path gradient audit.  The direct value-to-torso gradient
is nonzero and grows with scale in X0 but is exactly zero in X1.  Norms describe
the audited update and are not averaged treatment effects.}
\label{tab:app-grad}
\begin{tabular}{lrrrrr}
\toprule
Arm & Scale & Policy$\rightarrow$torso & Value$\rightarrow$torso & Value$\rightarrow$head & Global clip multiplier\\
\midrule
X0 & 0.25 & 0.158 & 0.220 & 2.726 & 0.3625\\
X0 & 0.50 & 0.155 & 0.743 & 8.705 & 0.1143\\
X0 & 1.0 & 0.153 & 1.850 & 21.359 & 0.0466\\
X0 & 2.0 & 0.152 & 4.069 & 46.913 & 0.0212\\
X0 & 4.0 & 0.151 & 8.510 & 98.135 & 0.0102\\
\midrule
X1 & 0.25 & 0.158 & 0.000 & 2.726 & 0.3638\\
X1 & 0.50 & 0.155 & 0.000 & 8.705 & 0.1148\\
X1 & 1.0 & 0.153 & 0.000 & 21.359 & 0.0468\\
X1 & 2.0 & 0.152 & 0.000 & 46.913 & 0.0213\\
X1 & 4.0 & 0.151 & 0.000 & 98.135 & 0.0102\\
\midrule
X5 & all & 0.150 & n/a & n/a & 1.0000\\
\bottomrule
\end{tabular}
\end{table}

\begin{multicols}{2}
This table distinguishes direct and indirect coupling.  X1 eliminates
value-to-torso gradients exactly.  Under a global norm, however, its value-head
gradient still determines the clip multiplier applied to policy gradients.
The unclipped-SGD \cleanup{} confirmation removes this specific coupling.

\section{Local Behavioral-Margin Audit}
\label{sec:supp-local-audit}

At frozen cooperative actors, we execute one matched production-path update
for X0 and X1 at reward scales 1 and 4.  We retain the torso coordinates of
the arm-by-scale difference-in-differences and the scale-4 X0--X1 contrast,
keeping the checkpoint policy head fixed.  For each direction $D$, we evaluate
the cooperation margins along
\begin{equation}
\epsilon\in\{-2,-1,-0.5,0,0.5,1,2\}.
\end{equation}
The centered difference between $\epsilon=0.5$ and $\epsilon=-0.5$ estimates
local alignment; centered curvature and linear remainder diagnose departures
from local linearity.  The \cleanup{} audit uses five checkpoint variants,
five matched batch seeds, and three evaluation seeds with 512 episodes each.
The \minex{} audit crosses three contexts, five batch seeds, and three
evaluation seeds with 256 episodes each.  We first average checkpoint- or
context-level fits within each batch seed, then form a two-sided $t_4$
interval over the five batch clusters.  The \minex{} soft-greedy surrogate
uses temperature 0.5.

\begin{table}[H]
\centering
\scriptsize
\caption{Centered local behavioral-margin estimates.  Both \cleanup{}
directions point toward its cooperation boundary; in \minex{}, only the flow
margin under the high-scale X0--X1 direction is clearly adverse.  Brackets
give 95\% batch-cluster intervals.}
\label{tab:app-margin-audit}
\resizebox{\columnwidth}{!}{%
\begin{tabular}{lll}
\toprule
Environment / direction & Margin & Estimate [95\% CI]\\
\midrule
\cleanup{} arm-by-scale & Return / apples &
$-0.0628$ [$-0.1177,-0.0078$]\\
\cleanup{} arm-by-scale & Pollution &
$-0.00526$ [$-0.00736,-0.00316$]\\
\cleanup{} scale-4 X0--X1 & Return / apples &
$-0.0644$ [$-0.1003,-0.0284$]\\
\cleanup{} scale-4 X0--X1 & Pollution &
$-0.00554$ [$-0.00678,-0.00431$]\\
\minex{} scale-4 X0--X1 & Flow &
$-6.69{\times}10^{-4}$ [$-1.26{\times}10^{-3},-7.38{\times}10^{-5}$]\\
\bottomrule
\end{tabular}
}
\end{table}

Full-actor \cleanup{} perturbations are nearly identical to torso-only
perturbations.  Its central-neighborhood linear remainder stays below 0.7\%
of the baseline margin.  In \minex{}, return and specialization move
downward under the scale-4 X0--X1 direction, but their conservative intervals
touch or cross zero; the Adam arm-by-scale direction is locally null.  The
\minex{} linear remainder stays below 0.8\% of the baseline margin.

\section{Implementation and Reproducibility Details}
\label{sec:supp-implementation}

\subsection{Architecture and Updates}
\label{sec:supp-architecture}

Both environments use parameter sharing across the two agents.  The X0 and X1
actors have identical forward computations; X1 detaches the feature tensor
passed to the value branch.  X0 and X1 critics are initialized identically per
seed.  X5 omits value parameters and computes complete-episode Monte-Carlo
returns.  Actor parameters are trainable in all arms.

\begin{table}[H]
\centering
\scriptsize
\caption{Matched implementation details for the confirmatory cells.  The
\minex{} cell uses clipped Adam and the \cleanup{} cell uses unclipped SGD;
within each cell, X0 and X1 otherwise share the actor, critic computation, and
update schedule.}
\label{tab:app-implementation}
\resizebox{\columnwidth}{!}{%
\begin{tabular}{lll}
\toprule
Component & \minex{}--Adam & \cleanup{}--SGD\\
\midrule
Observation / actions & 14 / factorized $5,3,3$ & 82 / categorical 6\\
Shared actor torso & Dense(128)--tanh--Dense(128)--tanh &
Dense(128)--tanh--Dense(128)--tanh\\
Policy / value heads & three linear / scalar & six-way linear / scalar\\
Parallel environments & 16 (X0/X1); 8 (X5) & 8\\
Rollout length & 128 (X0/X1); 512 (X5) & 64\\
Epochs / minibatches & 4 / 4 & 4 / 4\\
Optimizer / learning rate & Adam / $3{\times}10^{-4}$ & SGD / $10^{-3}$\\
Discount / GAE $\lambda$ & 0.99 / 0.95 & 1.0 / 1.0\\
PPO clip / value / entropy & 0.2 / 0.5 / 0.01 & 0.2 / 0.5 / 0\\
Global norm clipping & 0.5 & none\\
Frozen evaluation & 256 episodes every 4,096 steps &
64 episodes every update\\
\bottomrule
\end{tabular}
}
\end{table}

For X0, the production loss has schematic form
\begin{equation}
L_{\mathrm{X0}}=L_{\pi}+c_VL_V-c_H\mathcal{H},
\end{equation}
and a single backward pass updates the shared torso and both heads.  For X1,
$L_V$ receives $\mathrm{stopgrad}(h_\theta)$, which preserves value-head
learning but gives zero derivative with respect to $\theta$.  For X5,
$L_V$ is absent.

\subsection{True-Step-0 and Randomness Controls}
\label{sec:supp-randomness}

\begin{enumerate}
    \item Load the common warm-start actor.
    \item For X0/X1, complete matched critic warm-up without changing actor
    parameters.
    \item Evaluate true step 0 before any maintenance-training actor update.
    \item Use identical evaluation episode keys across arms and scales within
    seed.
    \item Continue training, then run frozen evaluation at every preregistered
    checkpoint.
\end{enumerate}
The evaluation path uses raw rewards and does not advance training RNG.
Saved maintenance fields are independently recomputed from checkpoint rows.
Runs continue after confirmed events so that recovery and raw-return
trajectories remain observable, and gradient-route audits invoke the same
production update path used in training.
The confirmatory maintenance seeds are 140--159 for \minex{} and 400--419 for
\cleanup{}.  Conditions sharing a seed use matched initial actors and
evaluation keys; later on-policy data are not forced to match after policies
diverge.

\subsection{Confirmatory Statistical Analysis}
\label{sec:supp-statistics}

For \minex{}, each at-risk checkpoint contributes one row to a discrete-time
logistic hazard model containing log time, reward scale, arm, arm--scale
interactions, and strategic context.  The confirmatory hypotheses test whether
the X0 scale slope is positive and whether it exceeds the X1 and X5 slopes.
Repeated checkpoints and matched conditions sharing a maintenance seed form
one covariance cluster.  The primary model therefore estimates a
routing-specific scale effect while accounting for context rather than pooling
contexts as independent studies.

For \cleanup{}, each seed yields an RMST slope across reward scales.  Paired
tests ask whether the X0 slope is negative and more negative than the
corresponding X1 and X5 slopes.  Each hypothesis family uses Holm correction.
The adjusted tests support the X0 scale effect and both arm contrasts
($p=0.002$).  When an arm has no observed events, survival, event counts, and
finite RMST contrasts replace a non-finite maximum-likelihood hazard
coefficient.

For \minex{}, the clustered hazard analysis supports both the positive X0
scale slope and its contrast with X1.  X5 has no confirmed events, so its
separation from X0 is reported descriptively rather than through a truncated
coefficient.  Context-stratified summaries and complete arm--scale survival
tables appear above.

\begin{table}[H]
\centering
\small
\caption{Seed-clustered \minex{} hazard contrasts.  The X0 hazard increases
with reward scale and more steeply than the matched X1 hazard.}
\label{tab:app-cluster-hazard}
\begin{tabular}{lrrr}
\toprule
Contrast & Estimate & Cluster SE & Holm $p$\\
\midrule
X0 scale slope $>0$ & 1.993 & 0.087 & $<10^{-100}$\\
X0 slope $>$ X1 slope & 1.807 & 0.270 & $1.04{\times}10^{-11}$\\
\bottomrule
\end{tabular}
\end{table}

\subsection{Artifact Authority}

The paper-facing data directory contains:
{\small
\begin{itemize}
    \item \path{minex/dv_r1_final_survival_summary.csv} and associated
    confirmatory model tables;
    \item \path{cleanup_cl9/}\allowbreak\path{run_summary.csv} and
    \path{cleanup_cl9/}\allowbreak\path{checkpoint_metrics.parquet} for the
    final SGD confirmation;
    \item \path{cleanup_cl9/}\allowbreak\path{adam_run_summary.csv} for the Adam
    boundary;
    \item \path{cleanup_cl9/}\allowbreak\path{variant_*} for the local
    checkpoint study;
    \item \path{cleanup_cl9/}\allowbreak\path{mechanism_audit.json} and
    \path{cleanup_cl9/}\allowbreak\path{clipping_coupling.json} for route
    diagnostics;
    \item \path{cross_environment/}\allowbreak
    \path{optimizer_environment_boundary_scale4.csv} for the four-cell
    summary.
\end{itemize}
}
Every paper-facing file has a SHA-256 entry in
\path{paper_data_manifest.csv}.  CL-8 is scientifically invalidated and
excluded.

\section{Claim Boundary}

\paragraph{Supported.}
\begin{itemize}
    \item Cooperation discovery and cooperation maintenance are empirically
    distinct evaluation problems.
    \item Positive reward scaling can preserve the strategic game while
    changing neural maintenance dynamics.
    \item Matched outcome contrasts associate direct value-to-torso routing
    with scale-sensitive maintenance loss in \minex{}--Adam and
    \cleanup{}--SGD.
    \item Local adverse alignment is clear in \cleanup{}--SGD but partial and
    contrast-dependent in \minex{}--Adam.
    \item Maintenance outcomes vary across optimizer--environment cells.
    \item X1 blocks the audited direct route while retaining value-head
    learning.
\end{itemize}

\paragraph{Not supported.}
\begin{itemize}
    \item Critics always destroy cooperation.
    \item Direct value routing is the unique or universally necessary cause of
    maintenance failure.
    \item Stop-gradient guarantees cooperation.
    \item \cleanup{} is an official benchmark implementation.
    \item The local perturbation study is replication across independently
    learned cooperative solutions.
    \item A censored run is permanently stable beyond the tested horizon.
\end{itemize}
\end{multicols}

\clearpage
\setcounter{secnumdepth}{1}
\appendix
\section{Formal Analysis and Proofs}
\label{app:theory-proofs}

This appendix gives the formal statements and proofs underlying the
conclusion-level account in the main paper.

\subsection{Strategic Invariance of Positive Reward Scaling}

For agent $i$, let
\begin{equation}
J_i(\boldsymbol{\pi})
:=\mathbb{E}_{\boldsymbol{\pi}}
\left[\sum_{t=0}^{\infty}\gamma^t r_{i,t}\right]
\end{equation}
be its discounted environment return under joint policy
$\boldsymbol{\pi}$.  For a scale $s>0$, define
$r_{i,t}^{(s)}:=s r_{i,t}$ and denote the corresponding return by
$J_i^{(s)}$.

\begin{suppTheorem}[Strategic invariance]
\label{prop:strategic-invariance}
For every fixed joint policy $\boldsymbol{\pi}$ and every $s>0$,
\begin{equation}
J_i^{(s)}(\boldsymbol{\pi})=sJ_i(\boldsymbol{\pi}).
\label{eq:return-scaling}
\end{equation}
Consequently, positive reward scaling preserves each agent's preference
ordering over fixed joint policies, its best-response correspondence, and the
set of Nash equilibria.  It also preserves every homogeneous linear incentive
comparison whose two sides consist only of discounted-return terms.
\end{suppTheorem}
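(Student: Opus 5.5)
The plan is to prove the scaling identity \eqref{eq:return-scaling} first and then derive each invariance claim from it by elementary order-theoretic arguments. The key observation is that reward scaling changes only the reward function, not the transition kernel or the policies. Consequently, for a fixed joint policy $\boldsymbol{\pi}$ the law of the trajectory $(s_0,\mathbf{a}_0,s_1,\dots)$ is identical in the original and scaled games. Since $r^{(s)}_{i,t}=s\,r_{i,t}$ pathwise, the scaled discounted sum equals $s\sum_t\gamma^t r_{i,t}$ on every trajectory. Linearity of expectation then gives $J_i^{(s)}(\boldsymbol{\pi})=sJ_i(\boldsymbol{\pi})$.

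The only technical point, and the step I expect to need the most care, is justifying the exchange of the constant $s$ with the infinite sum and expectation. I will assume what the return definition already presupposes: either rewards are bounded with $\gamma<1$, so the series converges absolutely and dominated convergence applies, or the horizon is finite, as in \cleanup{} with $\gamma=1$ and 64 steps. Under either assumption $J_i$ is a well-defined finite quantity, and multiplying an absolutely convergent series by a constant is immediate. If $J_i$ were allowed to take the value $\pm\infty$, the identity would still hold in the extended reals because $s>0$. I will add a sentence noting this.

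With \eqref{eq:return-scaling} in hand, the remaining claims follow in order. Because $x\mapsto sx$ is strictly increasing for $s>0$, $J_i(\boldsymbol{\pi})\ge J_i(\boldsymbol{\pi}')$ holds if and only if $J_i^{(s)}(\boldsymbol{\pi})\ge J_i^{(s)}(\boldsymbol{\pi}')$, so preference orderings over fixed joint policies are preserved. The best-response correspondence $\arg\max_{\pi_i}J_i(\pi_i,\boldsymbol{\pi}_{-i})$ is unchanged, since a positive multiple has the same maximizers; this holds whether or not the supremum is attained. Nash equilibria are the joint policies in which every agent plays a best response, so the set of equilibria coincides as well. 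For homogeneous linear incentive comparisons, I will write such a comparison as $\sum_k a_kJ_{i_k}(\boldsymbol{\pi}_k)\ \ge\ \sum_\ell b_\ell J_{i_\ell}(\boldsymbol{\pi}_\ell)$ with fixed coefficients and no constant term. Scaling multiplies both sides by $s>0$, which preserves the inequality, and this yields the claim. I will remark that this covers grim-trigger conditions such as $R\ge(1-\delta)T+\delta P$. It is the reason Corollary~\ref{cor:main-minex-threshold} holds: $\delta^\star$ is a ratio of differences of return terms, so the factor $s$ cancels, provided $T>P$.
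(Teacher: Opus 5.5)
Your proposal is correct and follows the paper's proof essentially step for step: you pull the factor $s$ out by linearity of expectation, then use strict monotonicity of $x\mapsto sx$ for $s>0$ to carry over preferences, best responses, Nash equilibria (as mutual best responses), and homogeneous linear comparisons. Your remark that the trajectory law under a fixed $\boldsymbol{\pi}$ is unchanged by reward scaling is a point the paper leaves implicit and is worth keeping; the dominated-convergence discussion is unnecessary, since factoring a constant out of a convergent series and out of an expectation involves no interchange of limits.
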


\begin{proof}
Linearity of expectation gives
\[
J_i^{(s)}(\boldsymbol{\pi})
=\mathbb{E}_{\boldsymbol{\pi}}
\left[\sum_{t\geq0}\gamma^t s r_{i,t}\right]
=sJ_i(\boldsymbol{\pi}).
\]
Because $s>0$, for any $\boldsymbol{\pi}$ and
$\boldsymbol{\pi}'$,
$J_i(\boldsymbol{\pi})\geq J_i(\boldsymbol{\pi}')$ if and only if
$J_i^{(s)}(\boldsymbol{\pi})\geq
J_i^{(s)}(\boldsymbol{\pi}')$.  Taking an $\arg\max$ over the policy of agent
$i$ therefore gives the same best-response set, and the fixed points of the
best-response correspondence, hence the Nash-equilibrium set, are unchanged.
Finally, multiplying every return term on both sides of a homogeneous linear
comparison by the same positive constant preserves its truth value.
\end{proof}

For the repeated production-and-exchange game, the same statement can be seen
directly from the incentive quantities.

\begin{suppCorollary}[Invariance of the repeated-game incentive threshold]
\label{cor:minex-threshold}
Let $P$, $R$, and $T$ denote the punishment, cooperative, and one-shot
deviation returns, with $T>P$ and
$\delta^\star=(T-R)/(T-P)$.  Under positive reward scaling,
$P^{(s)}=sP$, $R^{(s)}=sR$, and $T^{(s)}=sT$, so
\begin{equation}
\delta^{\star,(s)}
=\frac{T^{(s)}-R^{(s)}}{T^{(s)}-P^{(s)}}
=\delta^\star.
\end{equation}
Thus, profitability and incentive compatibility are preserved even though the
learning targets are rescaled.
\end{suppCorollary}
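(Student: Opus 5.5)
The plan is to obtain Corollary~\ref{cor:minex-threshold} directly from Theorem~\ref{prop:strategic-invariance}. The idea is to treat $P$, $R$, and $T$ as discounted returns of three fixed joint-policy profiles. $P$ is the value of the punishment (autarky) path, $R$ the value of the cooperative path, and $T$ the value of the best one-shot deviation.

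\textbf{Step 1: each incentive quantity scales by $s$.} Applying \eqref{eq:return-scaling} to each profile gives $P^{(s)}=sP$, $R^{(s)}=sR$, and $T^{(s)}=sT$. For $T$ this needs one extra remark, because $T$ is defined as a best deviation, that is, a maximum over deviation profiles. Theorem~\ref{prop:strategic-invariance} says scaling preserves the best-response set, so the maximizing deviation is the same at every scale. Hence $T^{(s)}=\max_{\text{dev}} sJ=s\max_{\text{dev}}J=sT$.

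\textbf{Step 2: substitute into the threshold.} Since $s>0$ and $T>P$, we have $T^{(s)}-P^{(s)}=s(T-P)>0$, so the ratio is well defined. Then
\[
\delta^{\star,(s)}=\frac{s(T-R)}{s(T-P)}=\delta^\star
\]
after cancelling the common positive factor $s$.

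\textbf{Step 3: profitability and incentive compatibility.}
\begin{itemize}
\item Profitability is the comparison $R>P$. It is a homogeneous linear comparison of return terms, so Theorem~\ref{prop:strategic-invariance} shows it is preserved.
\item Incentive compatibility is $\delta\ge\delta^\star$. The effective discount $\delta=\gamma q$ does not involve rewards at all, so the condition is preserved because $\delta^\star$ is unchanged. Equivalently, it is the linear comparison $\delta(T-P)\ge T-R$, which is again homogeneous in returns.
\end{itemize}

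The main point needing care is Step 1 for $T$, namely justifying that the maximizing deviation does not change with $s$. Positive homogeneity of the maximum settles this immediately, so the remaining steps are routine algebra.
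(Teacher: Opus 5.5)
Your proof is correct and matches the paper's argument: the scaled quantities are $sP$, $sR$, $sT$, and the common positive factor $s$ cancels in $(T-R)/(T-P)$. Your additions make explicit what the paper leaves implicit, namely that $T^{(s)}-P^{(s)}=s(T-P)>0$, that $T^{(s)}=sT$ follows from positive homogeneity of the maximum (so the appeal to best-response invariance is not needed), and that both $R>P$ and $\delta=\gamma q\ge\delta^\star$ are unaffected by the scaling.
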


\begin{proof}
Positive reward scaling gives $P^{(s)}=sP$, $R^{(s)}=sR$, and
$T^{(s)}=sT$. Substitution into the repeated-game threshold cancels the
common positive factor, yielding $\delta^{\star,(s)}=\delta^\star$.
\end{proof}

Theorem~\ref{prop:strategic-invariance} concerns the environment game.  It
does not imply that a neural optimizer applied to policy and value losses is
scale invariant.

\subsection{Matched Fixed-Batch Gradient Decomposition}

Let $\theta$ parameterize the actor torso, $\psi$ the policy head, and $\phi$
the value head.  Consider a common checkpoint and a fixed batch
$\mathcal{B}$ of observations and actions.  At reward scale $s$, write
$\widetilde A_b^{(s)}$ for the advantage used by the policy loss and
$y_b^{(s)}$ for the value target of sample $b$.  We use the quadratic value
loss
\begin{equation}
L_V^{(s)}(\theta,\phi)
=\frac{1}{2|\mathcal{B}|}
\sum_{b\in\mathcal{B}}
\left(V_{\theta,\phi}(o_b)-y_b^{(s)}\right)^2.
\label{eq:value-loss-general}
\end{equation}
The factor $1/2$ is conventional and only removes an inessential factor of two
from the gradients.

The fixed-batch comparison below uses two explicit conditions.  First, the
parameters, observations, actions, and old-policy probabilities are identical
across scales at the point of comparison.  Second, the normalized policy
advantages are identical,
\begin{equation}
\widetilde A_b^{(s)}=\widetilde A_b^{(1)}
\quad\text{for every }b\in\mathcal{B}.
\label{eq:policy-invariance-condition}
\end{equation}
Condition~\eqref{eq:policy-invariance-condition} holds, for example, when the
raw advantages satisfy $\widehat A_b^{(s)}=s\widehat A_b^{(1)}$ and exact
batch standardization is used without a scale-dependent additive constant.
Under this condition, the PPO policy-gradient and entropy-gradient terms are
the same at the matched checkpoint.  Define their common torso gradient by
$g_{\pi,\theta}$.

To avoid assuming more about the target construction than the implementation
provides, define the target displacement
\begin{equation}
d_b^{(s)}:=y_b^{(s)}-y_b^{(1)}.
\label{eq:target-displacement}
\end{equation}
Strict proportional scaling, $y_b^{(s)}=s y_b^{(1)}$, is treated as a special
case rather than built into the general result.

\begin{suppLemma}[Value-gradient displacement]
\label{lem:value-gradient-displacement}
At a common parameter point $(\theta,\phi)$,
\begin{equation}
\nabla_\theta L_V^{(s)}-\nabla_\theta L_V^{(1)}
=-\frac{1}{|\mathcal{B}|}
\sum_{b\in\mathcal{B}}
 d_b^{(s)}\nabla_\theta V_{\theta,\phi}(o_b).
\label{eq:value-gradient-displacement}
\end{equation}
\end{suppLemma}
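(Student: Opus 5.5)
The plan is to differentiate the quadratic value loss \eqref{eq:value-loss-general} directly with respect to the torso parameters $\theta$. I then subtract the two resulting expressions at the common parameter point $(\theta,\phi)$. The key observation is that the targets $y_b^{(s)}$ are treated as constants with respect to $\theta$: they are computed from rollouts and bootstrapped values that are held fixed during the update, as is standard for PPO regression targets. Under this convention the only $\theta$-dependence of $L_V^{(s)}$ enters through the prediction $V_{\theta,\phi}(o_b)$. This is the one modelling point I would state explicitly, since it is what makes the identity exact rather than approximate.

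First, I apply the chain rule to each summand, which gives
\[
\nabla_\theta L_V^{(s)}(\theta,\phi)=\frac{1}{|\mathcal{B}|}\sum_{b\in\mathcal{B}}\bigl(V_{\theta,\phi}(o_b)-y_b^{(s)}\bigr)\nabla_\theta V_{\theta,\phi}(o_b).
\]
The factor $1/2$ cancels the exponent $2$.

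Second, I write the same expression at $s=1$, evaluated at the same $(\theta,\phi)$ and on the same batch $\mathcal{B}$. Because the parameters and observations coincide, the prediction $V_{\theta,\phi}(o_b)$ and the Jacobian $\nabla_\theta V_{\theta,\phi}(o_b)$ are identical in the two expressions.

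Third, I subtract the two gradients. The terms $V_{\theta,\phi}(o_b)\nabla_\theta V_{\theta,\phi}(o_b)$ cancel term by term. What remains is
\[
-\frac{1}{|\mathcal{B}|}\sum_{b\in\mathcal{B}}\bigl(y_b^{(s)}-y_b^{(1)}\bigr)\nabla_\theta V_{\theta,\phi}(o_b).
\]
Substituting the definition \eqref{eq:target-displacement} of $d_b^{(s)}$ then yields \eqref{eq:value-gradient-displacement}.

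No strict proportionality $y_b^{(s)}=sy_b^{(1)}$ is needed for this argument. If proportionality does hold, it specializes to $d_b^{(s)}=(s-1)y_b^{(1)}$, and I would note this as a remark.

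The main obstacle is conceptual rather than computational. The work is in justifying that $y_b^{(s)}$ carries no gradient, and that "common parameter point" really means a fixed $(\theta,\phi)$ and a fixed batch across scales. If targets were bootstrapped through $V_{\theta,\phi}$ without a stop-gradient, extra terms would appear. I would therefore state explicitly that targets are treated as constants, which matches the implementation described in Section~\ref{sec:supp-architecture}.
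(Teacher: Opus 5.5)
Your proposal is correct and follows the paper's proof: differentiate the quadratic loss \eqref{eq:value-loss-general}, subtract the scale-$1$ expression at the same $(\theta,\phi)$ and batch, and cancel the prediction term $V_{\theta,\phi}(o_b)\nabla_\theta V_{\theta,\phi}(o_b)$ sample by sample. Your explicit statement that the targets $y_b^{(s)}$ carry no $\theta$-gradient is an assumption the paper leaves implicit in its fixed-batch setup, and making it explicit is a sensible clarification, although the implementation section does not actually spell out target detachment.
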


\begin{proof}
Differentiating Equation~\eqref{eq:value-loss-general} gives
\[
\nabla_\theta L_V^{(s)}
=\frac{1}{|\mathcal{B}|}
\sum_{b\in\mathcal{B}}
\left(V_{\theta,\phi}(o_b)-y_b^{(s)}\right)
\nabla_\theta V_{\theta,\phi}(o_b).
\]
Subtracting the scale-$1$ expression cancels the prediction term
$V_{\theta,\phi}(o_b)\nabla_\theta V_{\theta,\phi}(o_b)$ sample by sample,
leaving Equation~\eqref{eq:value-gradient-displacement}.
\end{proof}

Let $\kappa_{\mathrm{X0}}=1$ and $\kappa_{\mathrm{X1}}=0$ indicate whether the
value loss is allowed to update the actor torso.  Before clipping or adaptive
preconditioning, one gradient-descent step with learning rate
$\alpha_{\mathrm{lr}}>0$ is
\begin{equation}
\Delta\theta_{a,s}
:=\theta_{a,s}^{+}-\theta
=-\alpha_{\mathrm{lr}}
\left(g_{\pi,\theta}
+\kappa_a c_V\nabla_\theta L_V^{(s)}\right).
\label{eq:arm-update-unclipped}
\end{equation}

\begin{suppTheorem}[Direct-route scale dependence]
\label{prop:direct-route}
Under the matched fixed-batch conditions above and the unclipped update in
Equation~\eqref{eq:arm-update-unclipped}, the same-scale routing contrast is
\begin{equation}
\Delta\theta_{\mathrm{X0},s}
-\Delta\theta_{\mathrm{X1},s}
=-\alpha_{\mathrm{lr}}c_V\nabla_\theta L_V^{(s)}.
\label{eq:same-scale-route-contrast}
\end{equation}
Moreover, the arm-by-scale difference-in-differences is
\begin{align}
&\left(\Delta\theta_{\mathrm{X0},s}
-\Delta\theta_{\mathrm{X0},1}\right)
-\left(\Delta\theta_{\mathrm{X1},s}
-\Delta\theta_{\mathrm{X1},1}\right)
\nonumber\\
&\qquad=
\frac{\alpha_{\mathrm{lr}}c_V}{|\mathcal{B}|}
\sum_{b\in\mathcal{B}}
 d_b^{(s)}\nabla_\theta V_{\theta,\phi}(o_b).
\label{eq:direct-did-general}
\end{align}
If the value targets scale proportionally,
$y_b^{(s)}=s y_b^{(1)}$, then
\begin{equation}
\eqref{eq:direct-did-general}
=\alpha_{\mathrm{lr}}c_V(s-1)
\,\mathbb{E}_{b\sim\mathcal{B}}
\left[y_b^{(1)}\nabla_\theta V_{\theta,\phi}(o_b)\right].
\label{eq:direct-did-proportional}
\end{equation}
The direct term is identically zero in X1 and is absent in X5, which has no
learned value objective.
\end{suppTheorem}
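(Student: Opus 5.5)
The plan is to work directly from the arm-indexed update rule in Equation~\eqref{eq:arm-update-unclipped} and exploit the fact that, under the matched fixed-batch conditions, the policy/entropy torso gradient $g_{\pi,\theta}$ is one and the same vector for both arms and both scales. Condition~\eqref{eq:policy-invariance-condition} guarantees this: it fixes the normalized advantages, and the parameters, observations, actions and old-policy probabilities are common by assumption. With this in hand, every contrast in the statement becomes a linear combination of the value-gradient terms $\kappa_a c_V\nabla_\theta L_V^{(s)}$, in which the policy part cancels exactly.

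For the same-scale contrast \eqref{eq:same-scale-route-contrast}, I will substitute $\kappa_{\mathrm{X0}}=1$ and $\kappa_{\mathrm{X1}}=0$ into Equation~\eqref{eq:arm-update-unclipped} at scale $s$ and subtract. The common $-\alpha_{\mathrm{lr}}g_{\pi,\theta}$ cancels, leaving $-\alpha_{\mathrm{lr}}c_V\nabla_\theta L_V^{(s)}$.

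For the difference-in-differences \eqref{eq:direct-did-general}, I will regroup the four-term expression as the X0--X1 contrast at scale $s$ minus the X0--X1 contrast at scale $1$. By the previous step this equals $-\alpha_{\mathrm{lr}}c_V\bigl(\nabla_\theta L_V^{(s)}-\nabla_\theta L_V^{(1)}\bigr)$. I will then invoke Lemma~\ref{lem:value-gradient-displacement}, which applies because both gradients are evaluated at the same $(\theta,\phi)$ on the same batch. The two minus signs combine to give $\frac{\alpha_{\mathrm{lr}}c_V}{|\mathcal{B}|}\sum_b d_b^{(s)}\nabla_\theta V_{\theta,\phi}(o_b)$. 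This is also exactly Theorem~\ref{thm:main-direct-route}, once one identifies $\mathcal{D}^\theta_s$ with this difference-in-differences.

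For the proportional case \eqref{eq:direct-did-proportional}, I will substitute $d_b^{(s)}=(s-1)y_b^{(1)}$ from \eqref{eq:target-displacement} and read $\frac{1}{|\mathcal{B}|}\sum_b$ as the empirical expectation over the batch. The closing remarks follow directly from the definitions. In X1 the value loss receives $\mathrm{stopgrad}(h_\theta)$, so $\kappa_{\mathrm{X1}}=0$ and the $\theta$-derivative of the value term vanishes identically; in X5, $L_V$ does not exist, so there is no such term.

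The only delicate step is justifying that $g_{\pi,\theta}$ really is common across scales. This requires the PPO ratio, the clip region and the entropy term to depend on the rewards only through the normalized advantages $\widetilde A_b^{(s)}$, all evaluated at the shared parameters. I will argue this explicitly from the form of the clipped surrogate rather than leave it implicit. I will also note that no clipping or Adam preconditioning is applied, because either would break the linearity the argument relies on.
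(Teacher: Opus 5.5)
Your proposal is correct and follows the paper's proof: the common policy term $g_{\pi,\theta}$ cancels, Lemma~\ref{lem:value-gradient-displacement} handles the remaining X0 value-gradient difference, and the proportional case is the substitution $d_b^{(s)}=(s-1)y_b^{(1)}$. The differences are cosmetic: you write the difference-in-differences as the scale-$s$ minus scale-$1$ X0--X1 contrast, whereas the paper first differences each arm across scales, and your planned justification that $g_{\pi,\theta}$ is common across scales is something the paper simply takes as part of the matched fixed-batch assumptions.
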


\begin{proof}
Equation~\eqref{eq:same-scale-route-contrast} follows by subtracting the X1
update from the X0 update at the same scale.  Subtracting the scale-$1$ update
from the scale-$s$ update in each arm cancels the common policy term.  The X1
value-to-torso term is zero by construction, while the X0 difference is
$-\alpha_{\mathrm{lr}}c_V
(\nabla_\theta L_V^{(s)}-\nabla_\theta L_V^{(1)})$.
Applying Lemma~\ref{lem:value-gradient-displacement} proves
Equation~\eqref{eq:direct-did-general}; substituting
$d_b^{(s)}=(s-1)y_b^{(1)}$ gives
Equation~\eqref{eq:direct-did-proportional}.
\end{proof}

\begin{suppCorollary}[Scale invariance of the stopped route under unclipped SGD]
\label{cor:x1-unclipped}
Under the same fixed-batch conditions,
\begin{equation}
\Delta\theta_{\mathrm{X1},s}
=\Delta\theta_{\mathrm{X1},1}.
\end{equation}
Thus, in an unclipped SGD comparison, any scale-dependent X0--X1 separation at
the first matched update is attributable to the direct value-to-torso route,
subject to the stated policy-invariance condition.  X5 also lacks this direct
term, but its Monte-Carlo policy estimator is not assumed to be identical to
the X0/X1 estimator.
\end{suppCorollary}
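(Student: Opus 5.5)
The plan is to specialize the unclipped update in Equation~\eqref{eq:arm-update-unclipped} to the stopped arm and compare the two scales directly. Setting $\kappa_{\mathrm{X1}}=0$ gives
\[
\Delta\theta_{\mathrm{X1},s}=-\alpha_{\mathrm{lr}}\,g_{\pi,\theta}^{(s)},
\]
where I will temporarily write $g_{\pi,\theta}^{(s)}$ to make explicit that the policy-and-entropy torso gradient could, a priori, depend on the scale.

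The first substantive step is to show that $g_{\pi,\theta}^{(s)}=g_{\pi,\theta}^{(1)}$. The PPO surrogate gradient with respect to $\theta$ at the matched checkpoint depends on three kinds of quantity. The first is the current parameters $(\theta,\psi)$ together with the observations and actions, which are common by the first matched condition. The second is the old-policy probabilities entering the importance ratios and the clip window, also common by that condition. The third is the normalized advantages $\widetilde A_b^{(s)}$, which equal $\widetilde A_b^{(1)}$ by Condition~\eqref{eq:policy-invariance-condition}. The entropy term depends only on $(\theta,\psi)$ and the observations. Hence every input to the policy-and-entropy gradient is scale-free, and this is precisely the statement that $g_{\pi,\theta}$ is the common torso gradient defined in the text.

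The second step is to note that the value loss contributes nothing to the torso in X1. The value branch receives $\mathrm{stopgrad}(h_\theta)$, so $\partial L_V^{(s)}/\partial\theta=0$ identically in $s$. This holds even though the value-head gradient $\nabla_\phi L_V^{(s)}$ does depend on $s$ through $y_b^{(s)}$. Under unclipped SGD, parameter blocks are updated independently, so the $\phi$-gradient cannot enter the $\theta$-update. Combining the two steps gives $\Delta\theta_{\mathrm{X1},s}=-\alpha_{\mathrm{lr}}g_{\pi,\theta}=\Delta\theta_{\mathrm{X1},1}$.

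As a consistency check, the same identity follows from Theorem~\ref{prop:direct-route}. Its difference-in-differences already isolates the X0 value term, and the X1 bracket vanishes by the computation above.

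The attribution sentence then follows from Equation~\eqref{eq:same-scale-route-contrast}. The X0--X1 contrast equals $-\alpha_{\mathrm{lr}}c_V\nabla_\theta L_V^{(s)}$, and the X1 update is scale-invariant. Any change with $s$ in the X0--X1 separation therefore comes only from the X0 value-to-torso term, which Lemma~\ref{lem:value-gradient-displacement} identifies explicitly.

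I expect the main obstacle to be stating precisely where the argument breaks, since that is what justifies the word ``unclipped.'' With global clipping, the common multiplier $\chi_s$ depends on $\nabla_\phi L_V^{(s)}$, which couples the head back into the torso step; this is the mechanism Proposition~\ref{prop:main-clipping-coupling} records. With Adam, the coordinatewise normalization is a different map, and the update is no longer linear in a scale-free gradient. I will flag both exclusions explicitly.

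I will also note that X5 is excluded from the exact claim because its Monte-Carlo advantages are not assumed to satisfy Condition~\eqref{eq:policy-invariance-condition} with the X0/X1 estimator.
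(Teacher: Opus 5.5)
Your proposal is correct and follows the paper's proof. Setting $\kappa_{\mathrm{X1}}=0$ removes the value term from the torso step, and the policy-invariance condition makes $g_{\pi,\theta}$ identical across scales; your explicit unpacking of why the PPO and entropy gradients are scale-free just spells out what the paper asserts when it defines $g_{\pi,\theta}$. One small correction to your scope remark: a first unclipped Adam step would not break the X1 identity, because Adam acts coordinatewise on the already scale-free torso gradient (given matched moment state), so what Adam loses is the linear X0 formula of Theorem~\ref{prop:direct-route} and later-step matching, not $\Delta\theta_{\mathrm{X1},s}=\Delta\theta_{\mathrm{X1},1}$ itself, consistent with the paper's remark that Adam does not by itself transmit a detached value-head gradient into the torso.
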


\begin{proof}
For X1, the routing indicator is zero, so the torso update contains only the
matched policy gradient. The policy-invariance condition makes that gradient
identical across scales, which gives the stated equality.
\end{proof}

\subsection{Indirect Coupling Through Global Gradient Clipping}

Stopping the value gradient at the torso does not necessarily make the actor
update independent of reward scale when a single global norm is used to clip
all trainable parameters.  Let $G>0$ be the clipping threshold and define
\begin{equation}
\operatorname{Clip}_G(g)
:=\chi(g)g,
\qquad
\chi(g):=1\wedge\frac{G}{\lVert g\rVert_2}.
\label{eq:global-clip-operator}
\end{equation}
For X1, let the full pre-clipping gradient be
\begin{equation}
g_{\mathrm{X1},s}
=\left(g_{\pi,\theta},\;g_{\pi,\psi},\;
 c_V g_{V,\phi}^{(s)}\right),
\label{eq:x1-full-gradient}
\end{equation}
where the value-head gradient $g_{V,\phi}^{(s)}$ can depend on reward scale
even though the value-to-torso gradient is stopped.

\begin{suppProposition}[Clipping-mediated scale dependence]
\label{prop:clipping-coupling}
Under global norm clipping, the X1 torso update is
\begin{equation}
\Delta\theta_{\mathrm{X1},s}^{\mathrm{clip}}
=-\alpha_{\mathrm{lr}}\chi_s g_{\pi,\theta},
\qquad
\chi_s:=1\wedge\frac{G}{\lVert g_{\mathrm{X1},s}\rVert_2}.
\label{eq:x1-clipped-update}
\end{equation}
Hence
\begin{equation}
\Delta\theta_{\mathrm{X1},s}^{\mathrm{clip}}
-\Delta\theta_{\mathrm{X1},1}^{\mathrm{clip}}
=-\alpha_{\mathrm{lr}}(\chi_s-\chi_1)g_{\pi,\theta}.
\label{eq:x1-clipped-difference}
\end{equation}
Therefore, X1 can remain scale dependent even with an exactly zero
value-to-torso gradient whenever reward scaling changes the full gradient norm
and at least one comparison lies in the clipping regime.
\end{suppProposition}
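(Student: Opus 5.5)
The plan is to unpack the definition of the global clipping operator in Equation~\eqref{eq:global-clip-operator}, restrict it to the torso block of the X1 gradient, and then subtract. The proposition needs no limiting argument; it follows from the block structure of $g_{\mathrm{X1},s}$ in Equation~\eqref{eq:x1-full-gradient} together with the matched fixed-batch conditions already used for Theorem~\ref{prop:direct-route} and Corollary~\ref{cor:x1-unclipped}.

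\textbf{Step 1: torso block of the X1 gradient.} I will first observe that in X1 the value branch receives $\mathrm{stopgrad}(h_\theta)$, so the torso block of $g_{\mathrm{X1},s}$ contains only the policy and entropy contribution. Under the policy-invariance condition~\eqref{eq:policy-invariance-condition} and the common checkpoint, this block equals the same vector $g_{\pi,\theta}$ at every scale. This is the same reasoning used in the proof of Corollary~\ref{cor:x1-unclipped}, and I will cite it rather than repeat it. The value-head block $c_V g^{(s)}_{V,\phi}$ is the only scale-dependent component. Its dependence on $s$ enters through the targets $y_b^{(s)}$ in the same way as in Lemma~\ref{lem:value-gradient-displacement}, applied now to $\phi$ instead of $\theta$.

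\textbf{Step 2: apply the clip.} The clipped gradient $\operatorname{Clip}_G(g_{\mathrm{X1},s})=\chi_s g_{\mathrm{X1},s}$ multiplies every block by the same scalar $\chi_s$. Its torso block is therefore $\chi_s g_{\pi,\theta}$. A gradient-descent step with learning rate $\alpha_{\mathrm{lr}}$ then gives Equation~\eqref{eq:x1-clipped-update}. Writing the same identity at $s=1$ and subtracting gives Equation~\eqref{eq:x1-clipped-difference} by linearity in the scalar.

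\textbf{Step 3: the ``therefore'' clause.} Since $\lVert g_{\mathrm{X1},s}\rVert_2^2=\lVert g_{\pi,\theta}\rVert^2+\lVert g_{\pi,\psi}\rVert^2+c_V^2\lVert g_{V,\phi}^{(s)}\rVert^2$, reward scaling can change this norm only through the value-head block. The map $x\mapsto 1\wedge G/x$ is constant on $[0,G]$ and strictly decreasing on $(G,\infty)$. Hence $\chi_s\neq\chi_1$ whenever the two norms differ and at least one of them exceeds $G$. In that case the difference in~\eqref{eq:x1-clipped-difference} is nonzero provided $g_{\pi,\theta}\neq0$, and I will state that nondegeneracy explicitly.

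\textbf{Main obstacle.} The main obstacle is this last qualification rather than the algebra. Differing norms alone are not enough: I must rule out the case where both norms lie at or below $G$, where $\chi_s=\chi_1=1$ and Corollary~\ref{cor:x1-unclipped} is recovered. I must also note that the statement concerns a plain SGD step. For Adam, clipping is applied before the moment updates, so the identity holds only for the clipped gradient fed to the optimizer, not for the final step. I will make that scope explicit.
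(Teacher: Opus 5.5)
Your proposal is correct and matches the paper's proof: the global clip multiplies every block of $g_{\mathrm{X1},s}$ by the common scalar $\chi_s$, so taking the torso coordinates of $-\alpha_{\mathrm{lr}}\chi_s g_{\mathrm{X1},s}$ and subtracting the scale-$1$ expression gives both identities. Your Step~3 also spells out the sufficient condition behind the ``therefore'' clause, which the paper leaves implicit: $x\mapsto 1\wedge G/x$ is constant on $[0,G]$ and strictly decreasing beyond $G$, and the nondegeneracy condition $g_{\pi,\theta}\neq0$ is needed.
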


\begin{proof}
The clipping operator multiplies every component of the full gradient by the
same scalar $\chi_s$.  Selecting the torso coordinates of
$-\alpha_{\mathrm{lr}}\operatorname{Clip}_G(g_{\mathrm{X1},s})$ yields
Equation~\eqref{eq:x1-clipped-update}; subtracting the scale-$1$ expression
yields Equation~\eqref{eq:x1-clipped-difference}.
\end{proof}

In X0, the scale-dependent value gradient changes both the torso direction and
the clipping multiplier, so direct routing and clipping-mediated coupling are
entangled.  This is why the unclipped-SGD cell provides the cleanest
fixed-batch X0/X1 interpretation, whereas the clipped-Adam cell tests the route
inside a more realistic optimizer system.

\begin{suppRemark}[Adaptive-optimizer dependence]
For Adam,
$m_t=\beta_1m_{t-1}+(1-\beta_1)g_t$ and
$v_t=\beta_2v_{t-1}+(1-\beta_2)g_t^{\odot2}$, followed by a coordinatewise
preconditioned step.  Exact cancellation of a positive scale factor requires a
much stronger condition than reward scaling: the complete gradient history
must be related by a common multiplicative factor across all relevant
coordinates, with matched moments and negligible numerical offset.  Here,
reward scaling perturbs only value-related components and generally does so
additively through the target residual.  Consequently, neither the first-step
direction nor the later Adam trajectory is generally scale invariant.
\end{suppRemark}

\subsection{From Parameter Displacement to Maintenance Risk}

The previous results identify a parameter-update route, not a theorem that
cooperation must fail.  To connect the update to behavior, let
$h(\theta,\psi)$ be a differentiable local margin for one active cooperation
criterion, with $h\geq0$ indicating success.  A first-order expansion around
the matched checkpoint gives the following local implication.

\begin{suppCorollary}[First-order maintenance-margin effect]
\label{cor:behavioral-margin}
Under the assumptions of Theorem~\ref{prop:direct-route}, fix any common
policy head $\bar\psi$ (including the common post-update head $\psi^+$) and
suppose $h(\cdot,\bar\psi)$ is twice continuously differentiable on the local
update neighborhood, with
$\lVert\nabla_{\theta}^{2}h\rVert_{\mathrm{op}}\leq L_h$.  Then
\begin{align}
&\Big(h(\theta+\Delta\theta_{\mathrm{X0},s},\bar\psi)
-h(\theta+\Delta\theta_{\mathrm{X0},1},\bar\psi)\Big)
\nonumber\\[-2pt]
&\quad-
\Big(h(\theta+\Delta\theta_{\mathrm{X1},s},\bar\psi)
-h(\theta+\Delta\theta_{\mathrm{X1},1},\bar\psi)\Big)
\nonumber\\
&\qquad=
\frac{\alpha_{\mathrm{lr}}c_V}{|\mathcal{B}|}
\nabla_\theta h(\theta,\bar\psi)^\top
\sum_{b\in\mathcal{B}}
 d_b^{(s)}\nabla_\theta V_{\theta,\phi}(o_b)
+R_h,
\label{eq:margin-first-order}
\end{align}
where $\varepsilon_\Delta$ is the largest update norm in the comparison and
$|R_h|\leq2L_h\varepsilon_\Delta^2$.  The direct route reduces the local
cooperation margin only when the inner product in
Equation~\eqref{eq:margin-first-order} is sufficiently negative relative to
this remainder; it can be harmless or margin-increasing when the alignment is
zero or positive.
\end{suppCorollary}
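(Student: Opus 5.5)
The argument is a second-order Taylor expansion of each of the four margin values around the common checkpoint $\theta$, followed by the difference-in-differences identity of Theorem~\ref{prop:direct-route}. Write $g:=\nabla_\theta h(\theta,\bar\psi)$ and, for each arm $a\in\{\mathrm{X0},\mathrm{X1}\}$ and scale $u\in\{1,s\}$, apply Taylor's theorem with integral (or Lagrange) remainder to $h(\cdot,\bar\psi)$ along the segment from $\theta$ to $\theta+\Delta\theta_{a,u}$. This gives
\[
h(\theta+\Delta\theta_{a,u},\bar\psi)=h(\theta,\bar\psi)+g^\top\Delta\theta_{a,u}+r_{a,u},
\qquad |r_{a,u}|\le \tfrac12 L_h\lVert\Delta\theta_{a,u}\rVert_2^2\le \tfrac12 L_h\varepsilon_\Delta^2 .
\]
Here the assumption that the segment lies in the update neighborhood, where the Hessian operator norm is bounded by $L_h$, is what justifies the remainder bound.

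Next we form the difference-in-differences. The zeroth-order terms $h(\theta,\bar\psi)$ appear twice with each sign, so they cancel. The first-order terms combine linearly into
\[
g^\top\big[(\Delta\theta_{\mathrm{X0},s}-\Delta\theta_{\mathrm{X0},1})-(\Delta\theta_{\mathrm{X1},s}-\Delta\theta_{\mathrm{X1},1})\big].
\]
By Equation~\eqref{eq:direct-did-general} of Theorem~\ref{prop:direct-route}, which relies on Lemma~\ref{lem:value-gradient-displacement} and the matched fixed-batch policy-invariance condition, this equals
\[
\frac{\alpha_{\mathrm{lr}}c_V}{|\mathcal{B}|}\,g^\top\sum_{b\in\mathcal{B}} d_b^{(s)}\nabla_\theta V_{\theta,\phi}(o_b),
\]
which is the main term in Equation~\eqref{eq:margin-first-order}. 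We then set $R_h:=r_{\mathrm{X0},s}-r_{\mathrm{X0},1}-r_{\mathrm{X1},s}+r_{\mathrm{X1},1}$. The triangle inequality gives $|R_h|\le 4\cdot\tfrac12 L_h\varepsilon_\Delta^2=2L_h\varepsilon_\Delta^2$, which is the claimed bound.

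The only delicate point is the choice of head. The statement allows any fixed $\bar\psi$, including a common post-update $\psi^+$. Two things need to be checked for this.
\begin{itemize}
\item The same $\bar\psi$ must be used in all four evaluations. Otherwise the zeroth-order terms would not cancel, and $\nabla_\theta h$ would not be a single vector $g$.
\item The torso updates $\Delta\theta_{a,u}$ must be the unclipped updates of Equation~\eqref{eq:arm-update-unclipped}, so that Theorem~\ref{prop:direct-route} applies verbatim.
\end{itemize}
I expect the main obstacle to be the first check: confirming that the head is genuinely held common across arms and scales. Once that holds, the rest is a direct substitution.

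The interpretive tail of the statement follows from the identity just proved. If the inner product is nonnegative, or its magnitude is at most $2L_h\varepsilon_\Delta^2$, the sign of the left-hand side cannot be forced negative. Only an inner product more negative than $-2L_h\varepsilon_\Delta^2$ guarantees a reduction of the margin.
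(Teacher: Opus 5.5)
Your proposal is correct and follows essentially the same argument as the paper. You Taylor-expand each of the four margin terms at fixed $\bar\psi$, cancel the constant terms, identify the linear part with the difference-in-differences of Theorem~\ref{prop:direct-route}, and bound the four remainders (each at most $L_h\varepsilon_\Delta^2/2$) by the triangle inequality; your explicit requirement that each segment from $\theta$ to $\theta+\Delta\theta_{a,u}$ lie where the Hessian bound holds is a detail the paper leaves implicit.
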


\begin{proof}
Apply Taylor's theorem in $\theta$ at fixed $\bar\psi$ to the four margin terms.
Their constant terms cancel, and their linear terms equal
$\nabla_\theta h(\theta,\bar\psi)$ dotted with the update
difference-in-differences.  Theorem~\ref{prop:direct-route} gives the displayed
first-order term.  Each Taylor remainder is at most
$L_h\varepsilon_\Delta^2/2$ in absolute value; the triangle inequality over
four terms gives the stated bound.
\end{proof}

Corollary~\ref{cor:behavioral-margin} makes the scope of the theory explicit.
Reward scale changes the direct X0 update, but a maintenance event additionally
requires adverse alignment with a behavioral boundary and sufficient movement
to cross the available margin.  Later on-policy batches, optimizer state, and
recurrent dynamics can amplify or reverse the one-step effect.  The theory
therefore motivates, rather than replaces, the optimizer--environment boundary
matrix evaluated experimentally.

\end{document}